\newtheorem{theorem}{Theorem}[section]
\newtheorem{remark}{Remark}[subsection]
\newtheorem{lemma}{Lemma}[section]
\newtheorem{cor}{Corollary}[section]
\title{On the three-dimensional stability of Poiseuille flow in a finite-length duct}
\author[*]{Lei Xu\textsuperscript{1} \and Zvi Rusak\textsuperscript{1}}\thanks{Department of Mechanical, Aerospace and Nuclear Engineering, Rensselaer Polytechnic Institute}
\begin{document}
\maketitle

\begin{abstract}
	The stability of a three-dimensional, incompressible, viscous flow through a finite-length duct is studied. A divergence-free basis technique is used to formulate the weak form of the problem. A SUPG(streamingline upwind Petrov-Galerkin) based scheme for eigenvalue problems is proposed to stabilize the solution. With proper boundary condtions, the least-stable eigenmodes and decay rates are computed. It is again found that the flows are asymptotically stable for all $Re$ up to $2500$. It is discovered that the least-stable eigenmodes have a boundary-layer-structure at high $Re$, although the Poiseuille base flow does not exhibits such structure. At these Reynolds numbers, the eigenmodes are dominant in the vicinity of the duct wall and are convected downstream. The boundary-layer-structure brings singularity to the modes at high $Re$ with unbounded perturbation gradient. It is shown that due to the singular structure of the least-stable eigenmodes, the linear Navier-Stoker operator tends to have pseudospectrua and the nonlinear mechanism kicks in when the perturbation energy is still small at high $Re$. The decreasing stable region as $Re$ increases is a result of both the decreasing decay rate and the singular structure of the least-stable modes. The result demonstrated that at very high $Re$, linearization of Navier-Stokes equation for duct flow may not be a good model problem with physical disturbances.
\end{abstract}

\section{Introduction}\label{S:intro}
The stability of pipe or duct Poiseuille flow and its transition to turbulence remains an active topic. Reynolds~\cite{reynolds} first demonstrated that the pipe flow usually makes a transistion to turbulence at Reynolds number near $2000$. It was shown by Huerre and Rossi~\cite{heurre} that laminar flow can be achieved up to Reynolds number $~10^5$ by carefully reducing external disturbances. Various numerical simulations have been studied to analyze pipe flow stability, including~\cite{boberg},~\cite{sullivan},~\cite{priymak},~\cite{zikanov}. In addition, the linearized pipe Poiseuille flow has been investigated by~\cite{bergstrom},~\cite{schmid},~\cite{trefethen},~\cite{meseguer}. Duct flows with various cross-section shapes were studied by Delplace~\cite{delplace} and similar behavior as pipe flow was found. It seems such flows are linearly stable for all Reynolds numbers, yet they tend to be unstable above certain Reynolds number in experiments, see~\cite{darbyshire},~\cite{wygnanski}. 

One way to explain such phenomena is to analyze the nonlinear dynamics and identify the disturbance threshold that can keep the laminar flow. The inverse relationship between the level of flow perturbation and the Reynolds number up to which the laminar flow can sustain has been studied by~\cite{hof},~\cite{trefethenscience}. A question is therefore raised to identify the scaling factor $\gamma$, where $\epsilon=O(Re^{\gamma})$, with $\epsilon$ the mimial amplitude of all perturbations that can trigger instability. Hof~\cite{hof} showed by experiments that the factor $\gamma=-1$, while other estimates ranges from $-1$ to $-7/4$, see~\cite{meseguernonlin},~\cite{eckhardt}. Another way to understand this sensitivity is to look at the pseudospectra of the linearized problem, see also~\cite{meseguer},~\cite{trefethenscience}. The linear operator resulting from Navier-Stokes equations exhibits strong sensitivity at high $Re$ with respect to pipe geometry fluctuations and thus instability can occur with finite $Re$.

In this paper we computed the stability of three-dimensional Poiseuille flow in a finite-length duct, with physical boundary conditions that fixes the inlet velocity and the normal stresses at the outlet. The three-dimensional structure of the least-stable modes are resolved and the eigenvalues are computed. It is again shown that the duct Poiseuille flow is linearly stable up to $Re=2500$. The study uses a SUPG(Streamline upwind Petrov-Galerkin) based finite element method to compute a generalized matrix eigenvalue problem. The SUPG scheme, proposed by Hughes~\cite{hughes1979},~\cite{hughes1982},~\cite{hughes1987}, is well-known to provide the answer of finding a higher-order accurate method for the advection-diffusion equations without deterioration of convergence rate.  The numerical scheme also employs a weakly-divergence-free basis technique to overcome the singular mass matrix difficulty brought by the incompressibility condition. The divergence-free basis numercial scheme has been studied by various articles. Griffths~\cite{griffiths1979},~\cite{griffiths1979-2},~\cite{griffiths1981} proposed several element-based divergence-free basis on triangular and quadrilateral elements. Fortin~\cite{fortinnewfem} also investigated the discrete divergence-free subspace with piecewise constant pressure space and various discrete velocity spaces. Ye and Hall~\cite{ye} showed another discrete divergence-free basis with continuous discrete pressure space. The main advantage of weakly-divergence-free basis technique is that the degrees of freedom is greatly reduced when solving a large scale matrix problem. Apparantly, this method gives added benefits when analyzing the eigenvalue problem with incompressibility constraint. 

The computation result shows a boundary-layer-structure of the least-stable eigenmodes at high $Re$. As $Re$ increases, the modes are dominant close to the duct wall, forming a thin boundary layer with large gradients. The interesting structure, however, is not accompanied by a boundary layer of the base flow. The unbounded mode gradient is crucial in understanding the nonlinear effects generated by small perturbations. In high $Re$ flows, even physically small disturbances can induce non-negligible nonlinear effects, due to the large value of $\mathbf{u}\cdot\nabla \mathbf{u}$ where $\mathbf{u}$ denotes the velocity perturbation. The boundary-layer-structure also qualitatively explains the sensitivity of the linearized Navier-Stokes problem in the pseudospectra theory. Due to the eigenmodes dominance in the vicinity of the duct wall,  small fluctuations of duct geometry may significantly alter the least-stable mode shapes. This may cause large deviation of eigenvalues from ideal with small linear operator perturbations. This result gives another point of view regarding studies in~\cite{meseguer},~\cite{trefethenscience}. 


\section{Mathematical Model}\label{S:MathModel}
\subsection{Unsteady Navier-Stokes equations}
We consider a visous, incompressible Newtonian flow through a three-dimensional duct. The duct has a sqaure cross-section. The distances are scaled with the duct width $d$ and the non-dimensionalized duct length is $L$, see figure~\ref{F:Ductgeometry}. Rectangular coordinates are used, with $x$-$y$ plane lying on the duct cross-section and $z$ axis extending along the duct length. The flow domain is given by $\Omega=\{(x,y,z)|0\leq x\leq 1,0\leq y\leq 1,0\leq z\leq L\}$. The velocity components are scaled with the characteristic speed $U$, such that the nondimensionalized flux through the duct is unity. The fluid density $\rho$  and the viscosity $\mu$ are constants. The pressure is scaled with the dynamic pressure $\rho U^2$. The time is scaled with $\frac{d}{U}$. As a result, the nondimensional viscosity is  $\nu=\frac{1}{Re}=\frac{\mu}{\rho UL}$. The nondimensionalized Navier-Stokes equations are described as follows,
\begin{figure}
	\centering
	\includegraphics[width=5.22in]{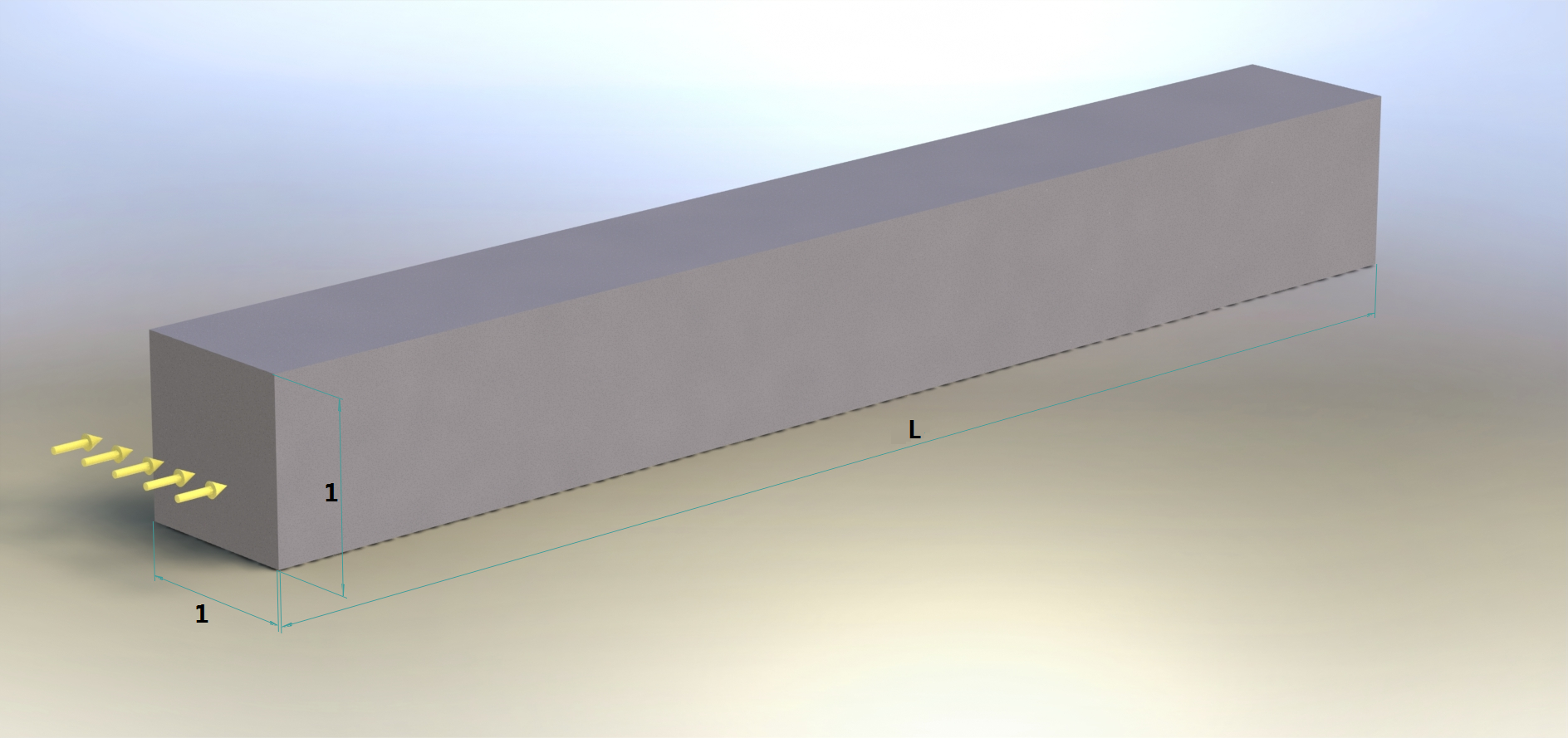}
	\caption{Duct geometry}
	\label{F:Ductgeometry}
\end{figure}

\begin{align}
  \frac{\partial{\mathbf{u}}}{\partial{t}} + \mathbf{u} \cdot \nabla \mathbf{u} &= -\nabla p + \nu \nabla^2\mathbf{u},~ in~ \Omega \label{E:unsteadyNS}\\
  \nonumber \nabla \cdot \mathbf{u} &= 0,~ in~ \Omega,
\end{align}
with boundary conditions,
\begin{align}
  inlet ~condition&:~ \mathbf{u} = \mathbf{u}_0(x,y),~ z=0,~for~all~t\ge 0\\
  wall ~conidtion&:~  \mathbf{u} = \mathbf{0}, ~ x=0, x=1, y=0, y=1,~for~all~t\ge 0 \\
  outlet ~condtion&:~ -p\mathbf{n} + \nu \frac{\partial{\mathbf{u}}}{\partial{\mathbf{n}}} = 0,~ z=L,~for~all~t\ge0,
\end{align}
and initial condition is,
\begin{align}
	\nonumber \mathbf{u} = \mathbf{f}_0(x,y,z),~when~ t=0, ~\mathbf{f}_0 ~satisfies~ above ~B.C..
\end{align}
Here the inlet velocity is fixed to the inlet velocity profile $\mathbf{u}_0(x,y)$, which will be described in detail shortly after. The duct wall imposes a no-slip condition. The outlet is subjected to a natural Neumann condition. The vector $\mathbf{n}$ denotes the unit outward normal vector of the duct outlet at $z=L$. This outlet condition decouples the velocity field $\mathbf{u}$ and the pressure field $p$ in the weak form. At high Reynolds number, the outlet boundary condition approaches a constant pressure outlet condition. The initial condition can be chosen arbitrarily.

It is known that there exists a steady state solution $\mathbf{u}_0(x,y,z)=[0,0,w_0(x,y)]^T$, $p(x,y,z)=p_0(z)$ to the above model problem, satisfying,
\begin{align}
	\frac{\partial^2 w_0}{\partial{x^2}}+\frac{\partial^2 w_0}{\partial{y^2}} =\frac{1}{\nu} \frac{\partial p_0}{\partial z}. \label{E:steadystate}
\end{align}
Here $\frac{1}{\nu} \frac{\partial p_0}{\partial z}$ is a negative scaled constant such that the constraint of total flux $1$ is satisfied. It can be verified that the solution of eqation~(\ref{E:steadystate}) is,
\begin{equation}
\begin{aligned}
w_0(x,y) = &K_0  \left[ -\frac{1}{4}(y^2-y)-\frac{1}{4}(x^2-x) + \right. \\
&\left. \frac{1}{2}\sum_{m=1}^{\infty}sin(m\pi x)(A_m e^{m \pi y}+B_m e^{-m \pi y}) +\right. \\
 &\left. \frac{1}{2}\sum_{m=1}^{\infty}sin(m\pi y)(C_m e^{m \pi x}+D_m e^{-m \pi x} ) \right], \\
 &0\leq x\leq 1, ~ 0\leq y\leq 1
  \label{E:steadystatew0}
\end{aligned}
\end{equation}
\begin{align}
	\nonumber A_m &=-\frac{2}{(m\pi)^3}\left[ (-1)^m-1\right] \frac{e^{-m\pi}-1}{e^{m \pi}-e^{-m\pi}}, \\
	\nonumber B_m &=\frac{2}{(m\pi)^3}\left[ (-1)^m-1\right] \frac{e^{m\pi}-1}{e^{m \pi}-e^{-m\pi}}, \\
	\nonumber C_m &=A_m, \\
	\nonumber D_m &=B_m, \\
	\nonumber K_0 &= 28.4541538.
\end{align}

Given the steady state solution~(\ref{E:steadystatew0}), we consider the associated linear stability problem. We denote the perturbation to the base flow $\mathbf{u}_0$ as $\mathbf{u'_\epsilon}$. The velocity field can be expressed as $\mathbf{u}=\mathbf{u}_0+\mathbf{u'_\epsilon}$.  Similarly the pressure field $p$ can be expressed as $p=p_0+p'_\epsilon$. By neglecting higher oder terms when the perturbations are small, the linearized Navier-Stokes equations are,
\begin{align}
	 \frac{\partial{\mathbf{u'_\epsilon}}}{\partial{t}} + \mathbf{u_0} \cdot \nabla \mathbf{u'_\epsilon} +  \mathbf{u'_\epsilon} \cdot \nabla \mathbf{u_0} &= -\nabla p'_\epsilon + \nu \nabla^2\mathbf{u'_\epsilon},~ in~ \Omega\\
	\nonumber  \nabla \cdot \mathbf{u'_\epsilon} &= 0,~ in~ \Omega.
\end{align}	
 Boundary conditions are,
\begin{align}
	inlet ~condition&:~ \mathbf{u'_\epsilon} = \mathbf{0},~ z=0,~for~all~t\ge 0  \label{E:eigenBCinlet} \\
	wall ~conidtion&:~  \mathbf{u'_\epsilon} = \mathbf{0},~  x=0, x=1, y=0, y=1,~for~all~t\ge 0 \label{E:eigenBCwall} \\
	outlet ~condtion&:~ -p'_\epsilon\mathbf{n} + \nu \frac{\partial{\mathbf{u'_\epsilon}}}{\partial{\mathbf{n}}} = 0, ~z=L,~for~all~t\ge0, \label{E:eigenBCoutlet}
\end{align}
and initial condition is,
\begin{align}
	\nonumber \mathbf{u'_\epsilon} = \mathbf{f'}_{\epsilon 0}(x,y,z),~when~ t=0,~\mathbf{f}_{\epsilon 0} ~satisfies~ above ~B.C..
\end{align}

\subsection{Eigenvalue problem formulation}
We now turn to study the spectral decomposition of the above linear system. The mode can be expressed as,
\begin{align}
\mathbf{u'_\epsilon} &= e^{\lambda t}\mathbf{u'}(x,y,z), \\
\nonumber p'_\epsilon &= e^{\lambda t}p'(x,y,z).
\end{align}
 For simplicity, we will omit the prime $'$ of the velocity perturbation mode $\mathbf{u'}$ and the pressure perturbation mode $p'$ from now on.
 The equations of the corresponding eigenvalue problem after substituting the velocity and pressure fields are,
 \begin{align}
 	\lambda \mathbf{u} = -\mathbf{u_0} \cdot \nabla \mathbf{u} & -\mathbf{u} \cdot \nabla \mathbf{u_0}  -\nabla p + \nu \nabla^2 \mathbf{u}  \label{E:eigenNS1}   \\
 \nonumber	\nabla \cdot \mathbf{u} &= 0 
 \end{align}
 Here $\mathbf{u}$ and $p$ are mode shapes and they are functions of space coordinates $(x,y,z)$ only.
 Equations~(\ref{E:eigenNS1}) are subjected to boundary conditions~(\ref{E:eigenBCinlet}),~(\ref{E:eigenBCwall}),~(\ref{E:eigenBCoutlet}).
 
 The sign of eigenvalue $\lambda$ determines whether each mode is asymptotically stable or not. It is crucial to find the least-stable eigenvalues with the greatest real part. These most dangerous modes are dominant either when the flow perturbation is decaying, or when the flow is unstable as the perturbation grows exponentially.
 
 \subsection{Divergence-free subspace and weak form}
 The classical approach in finding the solutions of equations~(\ref{E:eigenNS1}) is to solve the velocity mode $\mathbf{u}$ in functional space $\mathbf{V}=\left\lbrace  \mathbf{u}\in [H^1(\Omega)]^3 |~ \mathbf{u} ~satisfies ~boundary~ condition~(\ref{E:eigenBCinlet})-(\ref{E:eigenBCoutlet}) \right\rbrace $ and the pressure mode in $Q=\left\lbrace p| p\in L^2(\Omega)/\mathbb{R} \right\rbrace $. We now try to find the solution of $\mathbf{u}$ in its subspace  $\mathbf{V}^0 = \left\lbrace  \mathbf{u} \in \mathbf{V} | \nabla \cdot \mathbf{u}=0 \right\rbrace $. When limiting the solution space in the subspace $\mathbf{V}^0$, the incompressiblity constraint is no longer needed. Thus a weak form can be expressed from equation~(\ref{E:eigenNS1}),
 \begin{align}
 	\nonumber find~&\mathbf{u} \in\mathbf{V}^0,~\lambda\in\mathbb{C}~s.t, \label{E:weakform1}\\
 	\lambda M_0(\mathbf{u},\mathbf{w}) &= A_0(\mathbf{u},\mathbf{w})+C_0(\mathbf{u},\mathbf{w}),~\forall~\mathbf{w}\in \mathbf{V}^0, \\
 	\nonumber M_0(\mathbf{u},\mathbf{w}) &= \left<u^i,w^i \right>, \\
 	\nonumber A_0(\mathbf{u},\mathbf{w}) &= -\nu \left<\nabla_k u^i, \nabla_k w^i \right>, \\
 	\nonumber C_0(\mathbf{u},\mathbf{w}) &= \left< -u_0^k \nabla_k u^i - u^k \nabla_k u_0^i, w^i \right>, \\
 	\nonumber ||\mathbf{u}||_{L^2(\Omega)} &= 1. 
 \end{align}
 The inner-product is defined as the integral over $\Omega$, i.e., $\left\langle f,g \right\rangle = \int_{\Omega}fgd\Omega$.
 Note the pressure term and boundary integrals are eliminated as a result of the specified boundary conditions~~(\ref{E:eigenBCinlet})-(\ref{E:eigenBCoutlet}). The eigenfunctions are normalized such that its $L_2$ norm is unity.
 
   \subsection{Finite element formulations}
   A $Q_1-P_0$ mixed form is used in this paper. It is well-known that the trilinear velocity-constant pressure element suffers from a checkerboard pressure mode on regular meshes. The spurious pressure mode is due to the fact that this element does not satisfy the Bab\v{u}ska-Brezzi(B.B.) condition. However, if the velocity field is the only interested variable, the spurious pressure mode is irrelevant, since pressure term is eliminated in equation~(\ref{E:eigenNS1}). Boffi \emph{et al.}~(\cite{boffietal1997}) have shown that such mixed form is valid in analyzing eigenvalue problem with the Galerkin method.
   
   The finite element formulation first constructs a weakly-divergence-free basis with $Q_1$ elements. Then a discretized weak form from equation~(\ref{E:weakform1}) is formulated. 
   
   Let $\mathbf{V}_h$ and $Q_h$ be the finite dimensional subspace of $\mathbf{V}$ and $Q$.
   The weakly divergence-free subspace follows as,
   \begin{align}
   \mathbf{V}_h^0 &= \left\lbrace \mathbf{u}_h \in \mathbf{V}_h |	\left\langle \nabla \cdot \mathbf{u}_h, q_h \right\rangle =0, ~\forall q_h \in Q_h \right\rbrace \label{E:weaklydivfree} \\
   \nonumber &=Span(\mathbf\Phi_1,\dots,\mathbf\Phi_N)
   \end{align}
   Here the divergence-free basis function $\mathbf\Phi_i$ belongs to the $Q_1$ trilinear element. The trial function $q_h$ is constant in each element. The discrete subspace  is equivalent as a subspace imposing the conservation of mass condition on each element $\bar \Omega^e$, where $\Omega^e$ denotes the interior of the eth element.
   \begin{align}
   	\int_{\bar \Omega^e} \nabla \cdot \mathbf{u}_h d\Omega = \int_{\partial{\bar \Omega^e}} \mathbf{u}_h\cdot \mathbf{n} d\sigma =\sum_{i=1}^{6}\int_{F_i} \mathbf{u}_h \cdot \mathbf{n} d\sigma=  0, \forall \bar \Omega^e,
   \end{align}
   	where $F_i$ are the six faces of a cubic element.
   Fortin~\cite{fortinnewfem} showed that in three-dimensional cubic regular element, the weakly-divergence-free basis $\mathbf\Phi_i$ can be expressed as vortices lying on each face of the element, shown in figure~\ref{F:vortex}. The six faces of a cubic element is associated with six localized vortices. The basis can be physically viewed as a basis for a vorticity field $\mathbf{\omega}=\nabla \times \mathbf{u}$, in the discretized sense. The direction of the 'effective vorticity' is normal to each element face. Note that for each element $\Omega^e$, the six basis functions are linearly dependent and the degree of freedom is five.
   
   \begin{figure}
   	\centering
   	\includegraphics[width=3.22in]{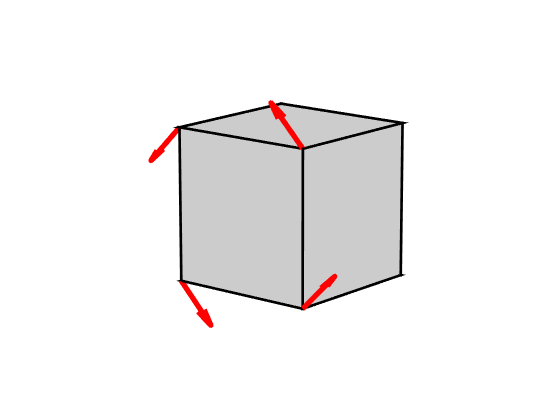}
   	\caption{Weakly-divergence-free basis}
   	\label{F:vortex}
   \end{figure}
   
   \begin{figure}
   	\centering
   	\includegraphics[width=3.22in]{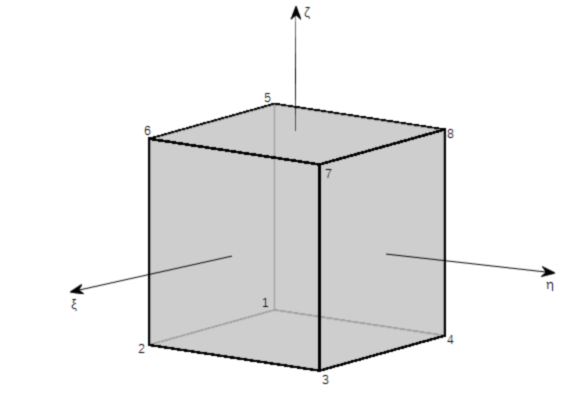}
   	\caption{Reference element}
   	\label{F:referencecube}
   \end{figure}
   
   One of the weakly-divergence-free basis can be expressed in the reference domain as,
   \begin{align}
   	\mathbf{\hat \Phi}_1(\xi,\eta,\zeta) =[0, \hat\phi_2+\hat\phi_3-\hat\phi_6-\hat\phi_7 , -\hat\phi_2+\hat\phi_3-\hat\phi_6+\hat\phi_7]^T,
   \end{align}
   where $\hat\phi_i(\xi,\eta,\zeta)$ is the typical trilinear $Q_1$ shape function, with nodal value $1$ on each of the element vertex. The reference element domain is shown in figure~\ref{F:referencecube}, the domain $\hat T = \left\lbrace (\xi,\eta,\zeta) | ~\xi\in[-1,1],\eta\in[-1,1],\zeta\in[-1,1] \right\rbrace $. Shape function index is associated with node index described in figure~\ref{F:referencecube}. The other weakly-divergence-free basis $\mathbf{\hat \Phi}_i(\xi,\eta,\zeta)$ on the reference domain can be constructed similarly.

   The finite element uses a simplest uniform cubic element discretizing domain $\Omega$ throughout this paper. The cube has identical edge length $h$ in each direction. As a result, the coordinate transformation between $(x,y,z)$ and $(\xi,\eta,\zeta)$ has a simple linear relationship,
   \begin{equation}
   \begin{cases}
   \nonumber x = \frac{h}{2} \xi + x_0, \\
   \nonumber y = \frac{h}{2} \eta + y_0, \\
   \nonumber z = \frac{h}{2} \zeta + z_0, \\
   \end{cases}
   \end{equation}
   where $x_0,y_0,z_0$ are coordinates of the element center. With the above assumptions, it can be verified that the weakly-divergence-free condition is satified for each $\mathbf{\Phi_i}(x,y,z)=\mathbf{\hat \Phi}_i(\xi,\eta,\zeta)$,
    \begin{align}
    	\int_{\partial{\bar \Omega^e}} \mathbf{\Phi_i} \cdot \mathbf{n} d\sigma =0, \forall \bar \Omega^e.
    \end{align}

    It should be pointed out that due to the non-zero outlet boundary condition~(\ref{E:eigenBCoutlet}), 'half vortex' elements must be included in addition to the 'full vortex' element described above, see figure~\ref{F:halfvortex}. One of the 'half vortex' element on the reference domain is, 
       \begin{align}
       	\mathbf{\hat \Phi}_7(\xi,\eta,\zeta) =[0, \hat\phi_3-\hat\phi_7 , -\hat\phi_3-\hat\phi_7]^T,
       \end{align}
       The index $7$ is labelled after the $6$ 'full vortices' on each of the element faces. The other basis functions at the duct outlet can be expressed in a similar way. The transformation to physical domain is straightforward. It can be verified that the 'half vortex' basis satisfies the weakly-divergence-free condition. There is also a linear dependency of four 'half vortices' $\mathbf{\hat \Phi}_7$ to $\mathbf{\hat \Phi}_{10}$ and a 'full vortex' $\mathbf{\hat \Phi}_6$ on the element outlet surface. Thus global basis $\mathbf{\Phi}_i$ mapped from $\mathbf{\hat \Phi}_6$ must be excluded from the final basis set.
    
    \begin{figure}
        	\centering
        	\includegraphics[width=3.22in]{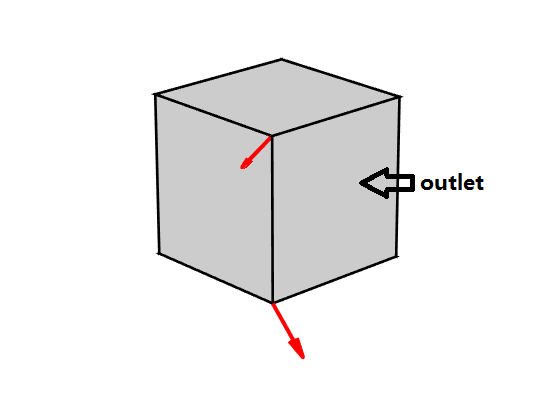}
        	\caption{Weakly-divergence-free basis at outlet}
        	\label{F:halfvortex}
    \end{figure}
   
   With $N_xN_yN_z$ elements in the flow domain, where $N_x$, $N_y$, $N_z$ are the number of elements in $x$, $y$, $z$ directions, the total degree of freedom in the finite element formulation is,
   \begin{align}
   N_{dof}=&(N_x-2)(N_y-1)(N_z-2)+(N_x-1)(N_y-2)(N_z-2) \\
    \nonumber         &+2(N_x-2)(N_y-1)+2(N_x-1)(N_y-2).
   \end{align}
   This is a result of the linear dependency of the six basis functions on each element and the inclusion of boundary vortex elements. It should be noted that all vortices with 'effective vorticity' along the $z$ direction are linearly dependent with the rest of the vortices, thus they must be excluded from the divergence-free basis set. The basis functions $\mathbf\Phi_i$ are constructed such that the 'effective vorticity' has positive component in each direction. This ensures the consistency when $\mathbf\Phi_i$ is evaluated on nearby elements.
   
   Although the construction of the finite element scheme enforces stric cubic elements, it is perhaps the simplest way to analyze the problem with minimal degrees of freedom. 
   
  \subsection{Streamline upwind Petrov-Galerkin(SUPG) method}
  Hughes~\cite{hughes1987} introduced SUPG scheme for computing time-dependent and steady state fluid problems, and enjoyed great success. When element Peclet number, defined as $\alpha=h|\mathbf{u}_0|/(2\nu)$, where $|\mathbf{u}_0|$ is the maximum magnitude of $\mathbf{u}_0$ in each element, is large, the Galerkin method suffers from wild oscillations that contaminate the numerical soltuions. In eigenvalue analysis, the same numerical instability prevales when the element Peclet number is large. Thus a SUPG formulation for the eigenvalue problem is proposed to eliminate the instability.
  
  Similiar to the classical SUPG method, we take the inner-product of both sides of equation~(\ref{E:eigenNS1}) with $\mathbf{w}_h$ and with $\tau \mathbf{u}_0\cdot\nabla\mathbf{w}_h$ in element internal domain $\tilde{\Omega}=\cup\Omega^e$, where $\Omega^e$ is the internal of the eth element. It follows that the discretized weak form is,
   \begin{align}
   		\nonumber find~&\mathbf{u}_h \in\mathbf{V}_h^0,~\lambda\in\mathbb{C}~s.t, \label{E:weakformSUPG}\\
   		\lambda M(\mathbf{u}_h,\mathbf{w}_h) =& A(\mathbf{u}_h,\mathbf{w}_h)+C(\mathbf{u}_h,\mathbf{w}_h),~\forall~\mathbf{w}_h\in \mathbf{V}_h^0, \\
   		\nonumber M(\mathbf{u}_h,\mathbf{w}_h) =& \left<u_h^i,w_h^i\right>+\left<u^i,\tau u_0^k\nabla_kw_h^i \right>_{\tilde{\Omega}}, \\
   		\nonumber A(\mathbf{u}_h,\mathbf{w}_h) =& -\nu \left<\nabla_k u_h^i, \nabla_k w_h^i \right>+\nu\left\langle \nabla^2u_h^i, \tau u_0^k\nabla_kw_h^i \right\rangle_{\tilde{\Omega}} , \\
   		\nonumber C(\mathbf{u}_h,\mathbf{w}_h) =& \left< -u_0^k \nabla_k u_h^i - u_h^k \nabla_k u_0^i, w_h^i\right> \\ \nonumber
   		&+\left<-u_0^k \nabla_k u_h^i - u_h^k\nabla_k u_0^i,\tau u_0^k\nabla_kw_h^i \right>_{\tilde{\Omega}}, \\
   		\nonumber ||\mathbf{u_h}||_{L^2(\Omega)} =& 1. 
   \end{align}
    The parameter $\tau$ can be chosen as $\tau=min(\alpha/3,1)h/(2|\mathbf{u}_0|)$, see~\cite{hughesbeyondsupg}. 
    
    Appendix gives the convergence analysis of this new scheme. Convergence rate is shown to be $O(h^2)$ for eigenvalue and $O(h)$ for eigenmode energy norm.
    
    The weak form~(\ref{E:weakformSUPG}) is equivalent to a generalized eigenvalue problem,
    \begin{align}
    	H\mathbf c&=\lambda M\mathbf c, \label{E:eigenmatrix} \\
    	\nonumber  H_{ij} &= A(\mathbf\Phi_j, \mathbf\Phi_i)+C(\mathbf\Phi_j, \mathbf\Phi_i), \\
    	\nonumber M_{ij} &= M(\mathbf\Phi_j, \mathbf\Phi_i).
    \end{align}
    The eigenmode $\mathbf{u}$ to be solved is $\mathbf{u}=\sum_{i}c_i\mathbf\Phi_i$.
  
  The implicitly-restarted Arnoldi method, provided by ARPACK, or MATLB eigs function is capable of finding the least-stable eigenvalues and eigenmodes.

\section{Numerical Results}\label{S:NumResult}
We now show some of the computed results of the new numerical scheme.

$Re$ from $500-2500$ are computed for a finite-length duct with nondimensional duct length $L=3$. Regular mesh size of $N_x=N_y=Nz/L$ is adopted. 

Figure~\ref{F:meshconvergence} shows the mesh convergence plot. Computed least-stable eigenvlaue $\lambda_1,\lambda_2$ and the third eigenvalue $\lambda_3$ tend to converge as the degrees of freedom increases. The following analyasis will all adopt the case with $N_x=40$, which is the largest degree of freedom computed so far.
\begin{figure}
\centering
\includegraphics[width=2.22in]{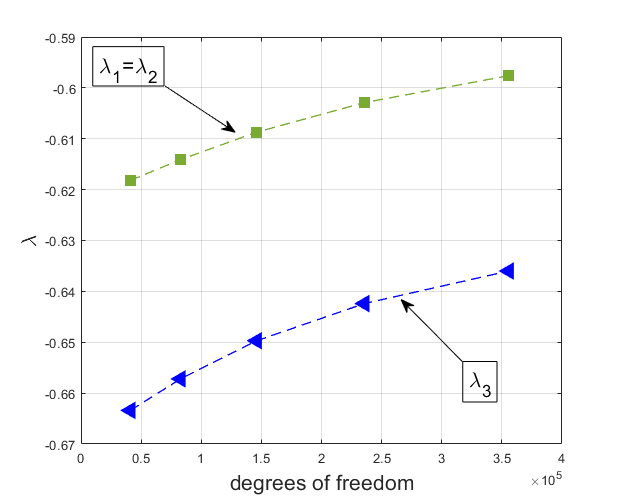}
\caption{Convergence plot of leading eigenvalues}
\label{F:meshconvergence}
\end{figure}

The computed eigenvalues show that the largest eigenvalue has multiplicity of $2$. Acutually the first two eigenmodes are identical by interchanging $x,y$ coordinates and $u,v$ components. The third eigenmode has a multiplicity of $1$ and this eigenmode possesses $x$-$y$ symmetry.

Figure~\ref{F:lambdavsRe} shows the three least-stable eigenvalues as a function of $Re$.
It is again found that the duct Poiseuille flow is linearly stable for all $Re$ up to $2500$, which agrees with previous pipe or duct theoretical studies.
\begin{figure}
	\centering
	\includegraphics[width=2.22in]{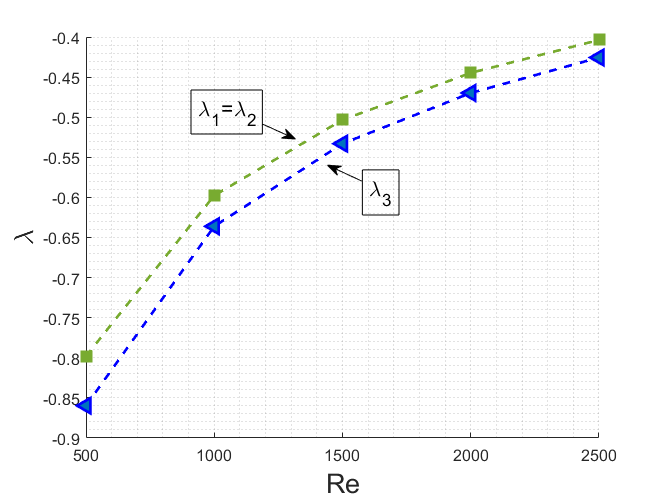}
	\caption{Least-stable eigenvalues as a function of $Re$}
	\label{F:lambdavsRe}
\end{figure}

It is interesting to investigate the leading-order eigenmode structures.
Figure~\ref{F:modes1} shows the normalized eigenmode corresponding to the first eigenvalue $\lambda_1=-0.4448$ at $Re=2000$. Figure~\ref{F:modes3} shows the normalized eigenmode corresponding to the third eigenvalue $\lambda_3=-0.4698$ at the same $Re$. Mesh size with $N_x\times N_y \times N_z = 40\times 40 \times 120$ are used to compute the solutions. The cut planes are chosen at places where the mode is dominant. The second mode is not plotted because it is a symmetric mode of the first one. Although complicated in the structures of the shape, there is a characteristic boundary-layer-structure that prevails at all high $Re$ leading-order modes shapes. The modes are dominant in the vicinity of the wall and are convected downstream. When the pipe inlet profile is fixed to the base flow profile $\mathbf{u}_0$, least-stable perturbations tends to dominate near the outlet. It can also be observed that these modes do not have a simple normal mode structure, e.g.,
\begin{align}
	\mathbf{u}(x,y,z)=\mathbf{\Psi}(x,y)e^{ikz},
\end{align}    
where $k$ is the wave number and $\Psi$ is a common shape function. Therefore this simple theoretical assumption may not be the best way to analyze pipe or duct flow stability.
\begin{figure}
	\centering
	\includegraphics[width=5.5in]{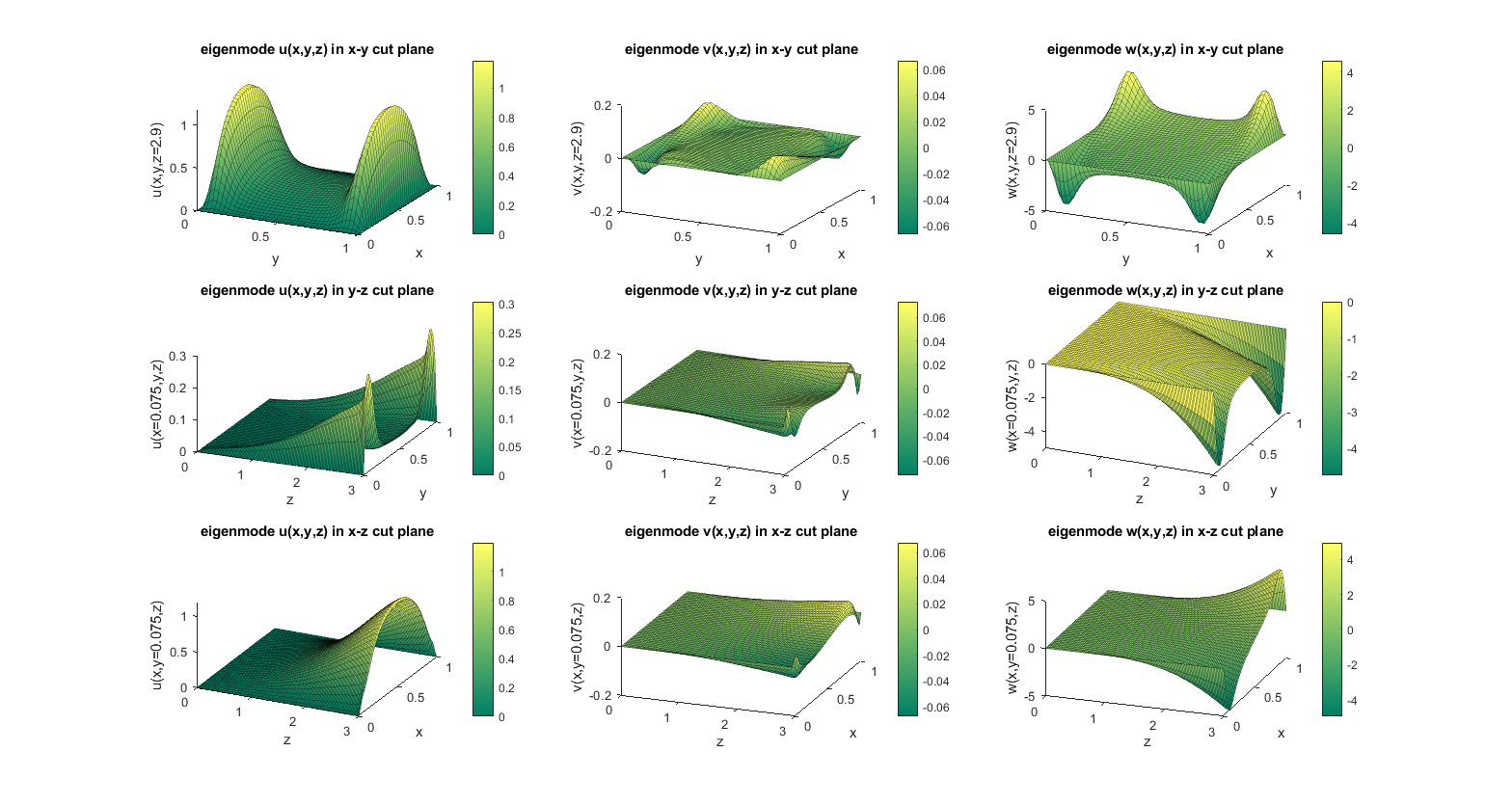}
	\caption{Eigenmode corresponding to the first eigenvalue $\lambda_1=-0.4448$ at $Re=2000$, nondimensional duct length is $L=3$ with mesh $N_x\times N_y\times N_z=40\times 40\times 120$. Computed mode is normalized, $||\mathbf{u}_1||=1$. First row, eigenmode $u,v,w$ in $x$-$y$ cut plane at $z=2.9$. Second row, eigenmode $u,v,w$ in $y$-$z$ cut plane at $x=0.075$. Third row, eigenmode $u,v,w$ in $x$-$z$ cut plane at $y=0.075$.}
	\label{F:modes1}
\end{figure}

\begin{figure}
	\centering
	\includegraphics[width=5.5in]{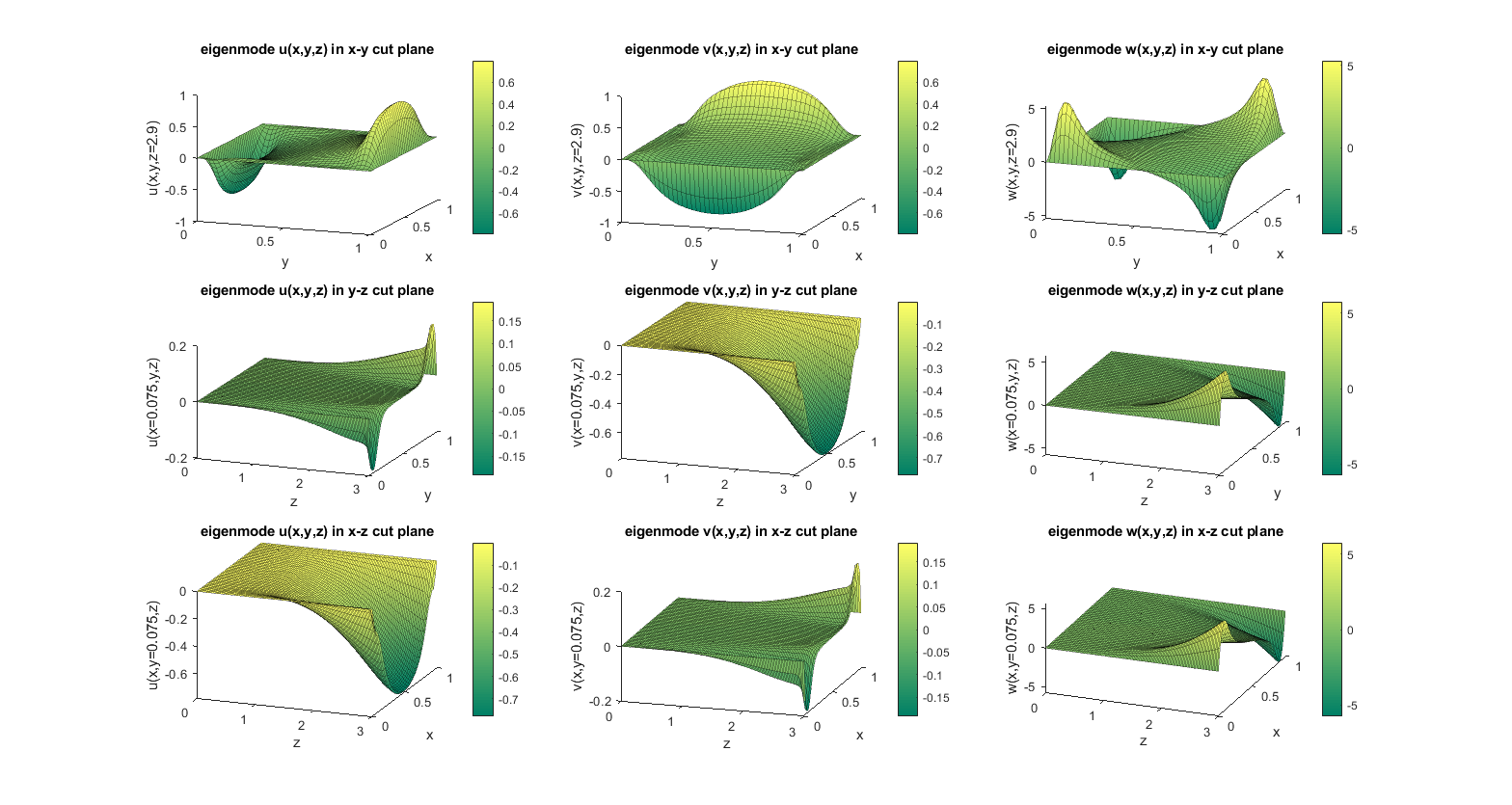}
	\caption{Eigenmode corresponding to the third eigenvalue $\lambda_3=-0.4698$ at $Re=2000$, nondimensional duct length is $L=3$ with mesh $N_x\times N_y\times N_z=40\times 40\times 120$. Computed mode is normalized, $||\mathbf{u}_3||=1$. First row, eigenmode $u,v,w$ in $x$-$y$ cut plane at $z=2.9$. Second row, eigenmode $u,v,w$ in $y$-$z$ cut plane at $x=0.075$. Third row, eigenmode $u,v,w$ in $x$-$z$ cut plane at $y=0.075$}
	\label{F:modes3}
\end{figure}

\begin{figure}
	\centering
	\includegraphics[width=2.5in]{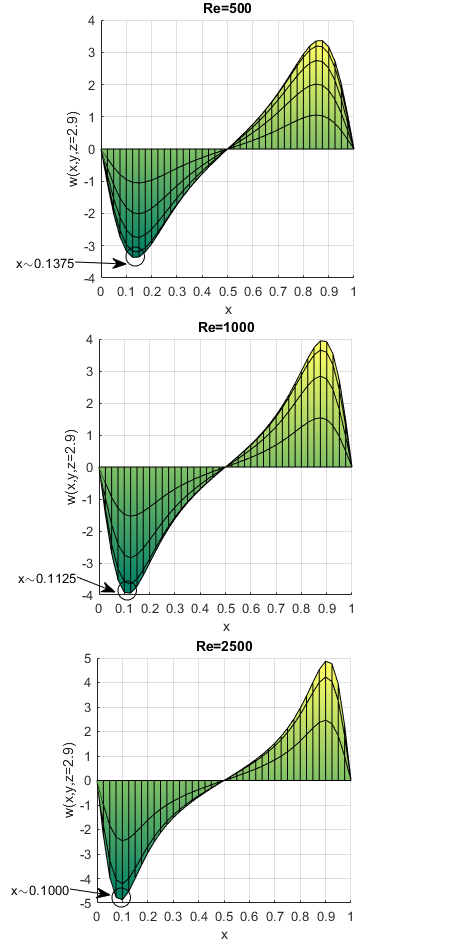}
	\caption{Side view of the first eigenmode $w(x,y,z=2.9)$. Top figure, mode at $Re=500$. Central figure, mode at $Re=1500$. Bottom figure, mode at $Re=2500$. Position $x$ where the mode peaks are labelled.}
	\label{F:blstruc}
\end{figure}
The boundary layer gets thinner and closer to the duct wall as $Re$ increases. The gradient of the normalized modes become unbounded as $Re$ grows. Figure~\ref{F:blstruc} shows the side view of the first eigenmode $w(x,y,z=2.9)$ corresponding to $\lambda_1$ at $Re=500,1000,2500$. The position where the mode function peaks are labelled. It can be seen the peak position from the duct wall decreases from $x\sim0.1375$ at $Re=500$ to $x\sim0.1$ at $Re=2500$. The physical consequences are discussed in the next section. It seems true that 'instability' tends to be generated near the wall and the outlet in the very beginning phase of transition.

\section{Relationship with pseudospectra theory and nonlinear Effects}\label{S:NonlinearEffects}
The relationship with pseduospectra theory and nonlinear effects of Poiseuille flow is studied. 

It has been investigated in~\cite{trefethenscience} that the instability of pipe Poiseuille is  attributed to the sensitivity of the linear operator $\mathcal{L}$ from the linearization of the Navier-Stokes equations, where $\mathcal{L}\mathbf{u}=-\mathbf{u_0} \cdot \nabla \mathbf{u} -\mathbf{u} \cdot \nabla \mathbf{u_0}  -\nabla p + \nu \nabla^2 \mathbf{u}$. It was shown that when $Re$ increase, a small perturbation to the linear operator $\mathcal{L}$ results in a relatively large eigenvalue deviation. As a result, pipe Poiseuille flow may become unstable at finite $Re$ although in ideal case it is linearly stable. We connect this claim to the boundary-layer-structure of the least-stable eigenmode, in a qualitative way. The boundary-layer-structure has most of its active part near the duct wall. As $Re$ becomes higher, the layer becomes thinner and closer to the wall. It is expected that a small fluctuation of the duct wall configuration will alter the eigenmode shape in a significant way. The base flow field, however, is not affected greatly by wall fluctuations since its velocity field is dominant at the bulk. With this physical picture in mind, a perturbation to $\mathcal{L}$ induced by small fluctuation of duct wall will change the eigenvalue noticeablely as a result of the singularity of mode shapes in the limit of high $Re$. This gives a physical point of view regarding the pseudospectra theory.

To investigate the nonlinear effects, we assume the initial condition to be one of the least-stable modes $\epsilon\mathbf{u}_1$, where $||\mathbf{u}_1||=1$ corresponding to $\lambda_1$. The instananeous flow energy growth is investigated. Physically it can be regarded as the study of the nonlinear dynamics for a short period of time. Multiplying $\mathbf{u}$ into equation~(\ref{E:unsteadyNS}) and integrate on both sides gives,
    \begin{align}
    \frac{1}{2} \frac{\partial}{\partial t}\left<\mathbf{u},\mathbf{u}\right> = A_0(\mathbf{u},\mathbf{u})+C_0(\mathbf{u},\mathbf{u}) - \left<\mathbf{u}\cdot\nabla\mathbf{u},\mathbf{u}\right>,
    \end{align}
    where $A_0$, $C_0$ are defined in equation~(\ref{E:weakform1}).
    Since the initial condition is chosen to be the least-stable eigenmode $\epsilon\mathbf{u}_1$, it follows,
    \begin{align}
    	\frac{1}{2} \frac{\partial}{\partial t}\left<\mathbf{u},\mathbf{u}\right>\bigg|_{t=0} = 
    	\epsilon^2 \lambda_1\left<\mathbf{u}_1,\mathbf{u}_1\right> - \epsilon^3 \left<\mathbf{u}_1\cdot\nabla\mathbf{u}_1,\mathbf{u}_1\right>,
    \end{align}
    where $ \lambda_1$ is the least-stable eigenvalue. Thus the total flow energy $\frac{1}{2}\left<\mathbf{u},\mathbf{u}\right>$ growth rate at instant $t=0$ is driven by the stabilizing linear effect and the nonlinear effect.  By switching the sign of the perturbation amplitude $\epsilon$, the nonlinear energy production rate can always be positive,  we therefore consider the nonlinear term to be destabilizing in the worst case.
    
    If the nonlinear effect is negligible, the flow energy decays.  The assumption to take the eigenmode as the initial condition physically makes sense when flow is attracted to the Poiseuille base flow branch. On the other hand, when $\epsilon$ is large, the nonlinear energy production rate may exceed the linear stabilizing effect and the flow energy starts to grow at $t=0$. The linearization assumption breaks down in this situation. Thus we study the critical amplitude $\epsilon$ at various $Re$. Perturbation with above the critical value may not result in a final transistion into turbulence since nonlinear dynamics after $t=0$ will be complicated and is not analyzed. However, it is a sign that linear stability problem fails to give a proper mathematical model and the flow tends to be dominated by nonlinear effects. With the  difficulty in quantifying subsequent nonlinear dynamics, the analysis only gives a rough estimate on how small perturbations may generate  non-negligible destablizing effects. It should be noted that a eigenmode initial perturbation is different from the other studies to induce impulsive or periodic distrubances and this is not very likely to be physically induced.
    
    The critical amplitude is determined by a differential balance equation,
    \begin{align}
    	  \epsilon = \inf \frac{|\lambda_1 u_1^i u_1^i d\Omega|} {|u_1^k\nabla_ku_1^iu_1^i d\Omega|}.
    \end{align}
    When the energy production term $|u_1^k\nabla_ku_1^iu_1^i d\Omega|$ exceeds the linear stablizing term $\lambda_1 u_1^i u_1^i d\Omega$ in any infinitesimal volume, the linearzation assumption is no longer valid. Thus the critical amplitude of the flow energy can be computed numerically by finite element integration on each hexahedral element. Figure~\ref{F:nonlinear} shows the critical perturbation amplitude $\epsilon$ vs $Re$, the effective nonlinear energy production $\eta=\sup\frac{|u_1^k\nabla_ku_1^iu_1^i d\Omega|}{|\lambda_1 u_1^i u_1^i d\Omega|}$ vs $Re$, and the first eigenvalue $\lambda_1$ vs $Re$. The threshold is seen to be decreasing greatly as $Re$ increases. Any reasonable perturbation at very high $Re$ tend to have non-negligible nonlinear effect. A fit to the $\epsilon-Re$ curve gives approximately $\epsilon\sim O(Re^{-0.6})$. This number is different from the results reported by~\cite{hof},~\cite{meseguernonlin} and~\cite{eckhardt}, mainly because the initial condition given here is not applicable to the other studies. It may also attribute to the difference between pipe and duct flow.  The significance of the nonlinear analysis demonstrates that the linear stability assumption fails to be a good model problem as $Re$ becomes higher. Two reasons contribute to the breakdown. First, the decreasing absolute value of first eigenvalue as $Re$ increass. Second, the increasing nonlinear energy production rate as a result of the growing eigenfunction gradient near the duct wall, as it is found that the maximum nonlinear energy production is near the peak of the boundary layer.
    
    \begin{figure}
    	\centering
    	\includegraphics[width=5.5in]{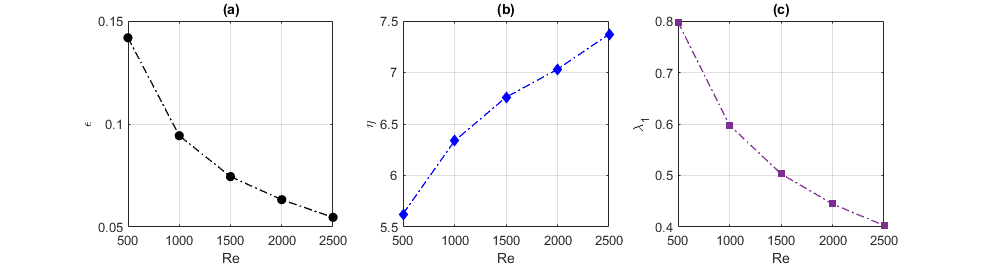}
    	\caption{(a). Critical perturbation amplitude $\epsilon$ vs. $Re$. (b). Effective nonliner energy production rate $\eta$ vs. $Re$. (c). Leading-order eigenvalue $\lambda_1$ vs. $Re$. }
    	\label{F:nonlinear}
    \end{figure}

\section{Conclusions}\label{S:conclusion}
In this paper have presented a SUPG based finite element method with divergence-free-basis technique to compute the Poiseuille flow eigenvalue problem. The flow is found to be linearly stable for $Re$ up to $2500$ and the modes show a boundary-layer-structure at high $Re$. This characteristic structure  at high $Re$ accounts for the sensitivity of the linearized Navier-Stokes problem with respect to geometry variation as well as the sensitivity of nonlinear effects.

\section{Appendix}
The convergence analysis in this paper assumes that the domain boundary $\partial{\Omega}$ is smooth enough, and also that the base flow $\mathbf{u}_0$, the eigenmode $\mathbf{u}$ and the corrresponding pressure field $p$ are smooth enough, for simplicity. 

Space $\mathbf{V}=\left\lbrace  \mathbf{u}\in [H^1(\Omega)]^3 |~ \mathbf{u} ~satisfies ~boundary~condition~(\ref{E:eigenBCinlet})-(\ref{E:eigenBCoutlet}) \right\rbrace $ and  $Q=\left\lbrace p| p\in L^2(\Omega)/\mathbb{R} \right\rbrace $ are now assumed to be complex spaces. $\mathbf{V}_h \subset \mathbf{V}$, $Q_h \subset Q$, $\mathbf{V}^0 = \left\lbrace  \mathbf{u} \in \mathbf{V} | \nabla \cdot \mathbf{u}=0 \right\rbrace$ and $\mathbf{V}_h^0 = \left\lbrace \mathbf{u}_h \in \mathbf{V}_h |	\left\langle \nabla \cdot \mathbf{u}_h, q_h \right\rangle =0, ~\forall q_h \in Q_h \right\rbrace$ also extend to complex spaces accordingly.

\begin{lemma} \label{L:coerce1}
	Let 
	\begin{align}
		\nonumber	a(\mathbf{u},\mathbf{w}) &= -A_0(\mathbf{u},\mathbf{w})+\gamma\left<\mathbf{u},\mathbf{w}\right>\\
		\nonumber	c(\mathbf{u},\mathbf{w}) &= -C_0(\mathbf{u},\mathbf{w}),
	\end{align}
	where $\mathbf{u},\mathbf{w} \in \mathbf{V}$. $A_0$, $C_0$ are defined in equation~(\ref{E:weakform1}).  There exists a sufficiently large $\gamma > 0$ such that,
	\begin{align}
		\nonumber \exists~\alpha_0>0,~a(\mathbf{u},\mathbf{u})+c(\mathbf{u},\mathbf{u}) \geq \alpha_0 ||\mathbf{u}||_1^2
	\end{align}
\end{lemma}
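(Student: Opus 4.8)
The plan is to prove a Gårding-type inequality by unfolding the definitions, exploiting the (skew) structure of the transport term against the divergence-free base flow, and absorbing the remaining zeroth-order production term into the $\gamma$-term by choosing $\gamma$ large. First I would expand: since $-A_0(\mathbf{u},\mathbf{u})=\nu\langle\nabla_k u^i,\nabla_k u^i\rangle$ and $c(\mathbf{u},\mathbf{u})=-C_0(\mathbf{u},\mathbf{u})=\langle u_0^k\nabla_k u^i+u^k\nabla_k u_0^i,\,u^i\rangle$, we obtain
\[
a(\mathbf{u},\mathbf{u})+c(\mathbf{u},\mathbf{u})=\nu\langle\nabla_k u^i,\nabla_k u^i\rangle+\gamma\langle u^i,u^i\rangle+\langle u_0^k\nabla_k u^i,u^i\rangle+\langle u^k\nabla_k u_0^i,u^i\rangle.
\]
The first two summands are real and equal $\nu\|\nabla\mathbf{u}\|_{L^2}^2+\gamma\|\mathbf{u}\|_{L^2}^2$. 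Since the spaces are complex, I would read the asserted bound as a bound on the real part, so in the last two summands I replace $(\nabla_k u^i)\,\overline{u^i}$ by $\tfrac12\nabla_k|u^i|^2$ after taking real parts.

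The key step is the transport term $\mathrm{Re}\,\langle u_0^k\nabla_k u^i,u^i\rangle=\tfrac12\int_\Omega u_0^k\,\nabla_k|\mathbf{u}|^2\,d\Omega$. Integrating by parts gives $-\tfrac12\int_\Omega(\nabla_k u_0^k)|\mathbf{u}|^2\,d\Omega+\tfrac12\int_{\partial\Omega}(u_0^k n_k)|\mathbf{u}|^2\,d\sigma$. The volume integral vanishes because $\nabla\cdot\mathbf{u}_0=0$. For the boundary integral, the no-slip wall condition~(\ref{E:eigenBCwall}) and the inlet condition~(\ref{E:eigenBCinlet}) force $\mathbf{u}=\mathbf{0}$ there, so only the outlet $z=L$ survives; there $\mathbf{u}_0=[0,0,w_0]^T$ and $\mathbf{n}=[0,0,1]^T$, so $u_0^k n_k=w_0\ge 0$. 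Hence this whole term equals $\tfrac12\int_{z=L}w_0|\mathbf{u}|^2\,d\sigma\ge 0$ and only helps coercivity. This sign — the absence of backflow ($w_0\ge0$) at the outlet — is the crux, and I expect it to be the main obstacle to state cleanly: if the outlet admitted $w_0<0$, the argument would require the outlet term be controlled differently.

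It then remains to dominate the production term. Using the smoothness of $\mathbf{u}_0$ assumed in the Appendix, $\|\nabla\mathbf{u}_0\|_{L^\infty(\Omega)}=:C_0<\infty$, so pointwise $|u^k\nabla_k u_0^i\,\overline{u^i}|\le C_0|\mathbf{u}|^2$, giving $|\mathrm{Re}\,\langle u^k\nabla_k u_0^i,u^i\rangle|\le C_0\|\mathbf{u}\|_{L^2}^2$. Combining the three estimates,
\[
\mathrm{Re}\big[a(\mathbf{u},\mathbf{u})+c(\mathbf{u},\mathbf{u})\big]\ge\nu\|\nabla\mathbf{u}\|_{L^2}^2+(\gamma-C_0)\|\mathbf{u}\|_{L^2}^2.
\]
Choosing any $\gamma>C_0$ and setting $\alpha_0=\min(\nu,\gamma-C_0)>0$ yields $\mathrm{Re}[a+c]\ge\alpha_0(\|\nabla\mathbf{u}\|_{L^2}^2+\|\mathbf{u}\|_{L^2}^2)=\alpha_0\|\mathbf{u}\|_1^2$, which is the claim. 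The only genuinely delicate points are the integration-by-parts boundary bookkeeping (which uses all three boundary conditions together with $\nabla\cdot\mathbf{u}_0=0$) and the real-part interpretation in the complex setting; the production bound is routine.
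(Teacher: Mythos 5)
Your proposal is correct and follows essentially the same route as the paper's proof: integrate the transport term by parts using $\nabla\cdot\mathbf{u}_0=0$ and the inlet/wall conditions so that only a nonnegative outlet flux term survives, bound the production term $\left\langle \mathbf{u}\cdot\nabla\mathbf{u}_0,\mathbf{u}\right\rangle$ by a constant times $\|\mathbf{u}\|_{L^2}^2$ via the boundedness of $\nabla\mathbf{u}_0$, and absorb it with a sufficiently large $\gamma$. Your version is in fact slightly more careful than the paper's, since you make the real-part interpretation explicit for the complex setting and spell out the choice $\alpha_0=\min(\nu,\gamma-C_0)$.
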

\begin{proof}
	First we have,
	\begin{equation}
	-A_0(u,u) = \nu\left< \nabla_k u^i, \nabla_k u^i \right> > 0. \label{E:viscint1}
	\end{equation}
	
	Next $\left< \mathbf{u}_0 \cdot \nabla \mathbf{u}, \mathbf{u} \right>$ is estimated,
	
	\begin{align}
		\left< \mathbf{u}_0 \cdot \nabla \mathbf{u}, \mathbf{u} \right> &= 
		\int_{\Omega} u_0^k\nabla_ku^iu^i d\Omega  \label{E:convecint1} \\
		\nonumber &=	\int_{\Omega}\nabla_k( u_0^ku^iu^i) d\Omega - 	\int_{\Omega}  u_0^ku^i\nabla_ku^i d\Omega.
	\end{align}
	Here $\nabla_ku_0^k=0$ is implied.
	From equation~(\ref{E:convecint1}),
	\begin{align}
		\left< \mathbf{u}_0 \cdot \nabla \mathbf{u}, \mathbf{u} \right> &= 
		\frac{1}{2}	\int_{\Omega}\nabla_k( u_0^ku^iu^i) d\Omega  \label{E:convecint2} \\
		\nonumber &= \frac{1}{2} \int_{\partial \Omega}  u_0^ku^iu^in^k d\sigma.
	\end{align}
	
	Here $\mathbf{n}$ is the unit outward normal vector. Since the problem assumes outflow at the outlet, namely $u_0^kn^k|_{outlet}>0$, and at the inlet and wall $u^i=0$, it follows that,
	\begin{align}
		\left< \mathbf{u}_0 \cdot \nabla \mathbf{u}, \mathbf{u} \right> > 0.
	\end{align}
	
	Then $\left< \mathbf{u} \cdot \nabla \mathbf{u}_0, \mathbf{u} \right>$ is estimated.
	Since the base flow field $\mathbf{u}_0$ and its gradient is bounded, there exists a constant tensor $N_{ki}$ such that,
	\begin{align}
		|\nabla_ku_0^i| \leq N_{ki}.
	\end{align}
	Then,
	\begin{align}
		\left| \left< \mathbf{u} \cdot \nabla \mathbf{u}_0, \mathbf{u} \right>\right|  &= 
		\left| \int_{\Omega} u^k\nabla_ku_0^iu^i d\Omega\right|   \label{E:convecint3} \\
		\nonumber & \leq \int_{\Omega}N_{ki}|u^k||u^i|d\Omega \\
		\nonumber & < \gamma ||u||^2,
	\end{align}
	Here $\gamma$ is a sufficiently large positive number. It is apparant that when $\gamma>0$ is large enough, matrix $\gamma I-N $ is positive definite.
	
	It then follows,
	\begin{align}
		\exists~\alpha_1 > 0,~\left< \mathbf{u} \cdot \nabla \mathbf{u}_0, \mathbf{u} \right> + \gamma\left<\mathbf{u},\mathbf{u}\right> > \alpha_1 \left<\mathbf{u},\mathbf{u}\right>. \label{E:convecint4}
	\end{align}
	
	Combining equations~(\ref{E:viscint1}),~(\ref{E:convecint1}),~(\ref{E:convecint4}), $\exists~\alpha_0>0$, 
	$a(\mathbf{u},\mathbf{u})+c(\mathbf{u},\mathbf{u}) \geq \alpha_0 ||\mathbf{u}||_1^2$.
	
\end{proof}

\begin{lemma}\label{L:coercivitySUPG}
	Let $\Omega^e$ be the interior of the eth element, $\tilde{\Omega}=\cup\Omega^e$, and,
	\begin{align}
		&a_{supg}(\mathbf{u}_h,\mathbf{w}_h)+c_{supg}(\mathbf{u}_h,\mathbf{w}_h)  \\
		\nonumber =&\left<\nu \nabla\mathbf{u}_h,\nabla\mathbf{w}_h\right> +  \left< \mathbf{u}_0\cdot\nabla\mathbf{u}_h,\mathbf{w}_h\right> \\ 
		\nonumber &+\left< \mathbf{u}_h\cdot\nabla\mathbf{u}_{0},\mathbf{w}_h\right>+\left< \gamma\mathbf{u}_h,\mathbf{w}_h\right>\\
		\nonumber	&-\left< \nu\nabla^2\mathbf{u}_h,\tau \mathbf{u}_0\cdot\nabla\mathbf{w}_h\right>_{\tilde{\Omega}} + \left< \mathbf{u}_0\cdot\nabla\mathbf{u}_h,\tau \mathbf{u}_0\cdot\nabla\mathbf{w}_h\right>_{\tilde{\Omega}} \\
		\nonumber &+
		\left< \mathbf{u}_h\cdot\nabla\mathbf{u}_{0},\tau \mathbf{u}_0\cdot\nabla\mathbf{w}_h\right>_{\tilde{\Omega}} +
		\left< \gamma\mathbf{u}_h,\tau \mathbf{u}_0\cdot\nabla\mathbf{w}_h\right>_{\tilde{\Omega}}.
	\end{align}
	where $\mathbf{u}_h,\mathbf{w}_h \in \mathbf{V}_h$.  $\mathbf{V}_h$ is the discretized continuous piecewise polynomial solution space that satifies boundary conditions~(\ref{E:eigenBCinlet})-(\ref{E:eigenBCoutlet}).
	\begin{align}
		\nonumber	\tau&=\frac{h}{2|\mathbf{u}_0|}\zeta(\alpha),\\
		\nonumber \alpha&= \frac{h|\mathbf{u}_0|}{2\nu}.
	\end{align}
	Here $\alpha$ is the element Peclet number. $\zeta(\alpha)$ satisfies
	\begin{align}
		\nonumber	m=\sup_{\alpha}\frac{\zeta(\alpha)}{\alpha} \leq \frac{4}{C^2},
	\end{align}
	where $C$ is the inverse estimate constant that satifies,
	\begin{align}
		||\Delta\mathbf{v}_h||_{\tilde{\Omega}} \leq Ch^{-1}||\nabla \mathbf{v}_h||_{\Omega}. \label{E:inverseestimate}
	\end{align}     
	$h$ is the mesh parameter. In the regular cubic element case, it is the edge length. $|\mathbf{u}_0|$ denotes the maximum of base flow velocity magnitude within each element. $\tau$ varies across each element. The subscript $\tilde{\Omega}$ means that the inner-product is taken in the interior of each element and then summing the elementwise integration up. 
	
	Then $\exists~\gamma>0$ defined in lemma~\ref{L:coerce1}, and $\exists~\delta>0$, when $\tau < \delta$, the following coercive condition holds,
	\begin{align}
		\nonumber \exists~\alpha_0>0,~a_{supg}(\mathbf{u}_h,\mathbf{u}_h)+c_{supg}(\mathbf{u}_h,\mathbf{u}_h) \geq \alpha_0 ||\mathbf{u}_h||_1^2+ \frac{1}{6}||\tau^{1/2} \mathbf{u}_0\cdot\nabla\mathbf{u}_h||_{\tilde{\Omega}}^2.
	\end{align}
\end{lemma}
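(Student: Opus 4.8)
The plan is to split $a_{supg}(\mathbf{u}_h,\mathbf{u}_h)+c_{supg}(\mathbf{u}_h,\mathbf{u}_h)$ into its Galerkin part (the first four terms, which coincide with $a(\mathbf{u}_h,\mathbf{u}_h)+c(\mathbf{u}_h,\mathbf{u}_h)$ of Lemma~\ref{L:coerce1} since $\mathbf{V}_h\subset\mathbf{V}$) and its four SUPG stabilization terms $T_1,T_2,T_3,T_4$, and then to show that the lone \emph{good} stabilization term $T_2=||\tau^{1/2}\mathbf{u}_0\cdot\nabla\mathbf{u}_h||_{\tilde\Omega}^2=:S$ dominates the three remaining ones once a small part of its mass and of the Galerkin coercivity is spent absorbing them. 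First I would invoke not just the conclusion of Lemma~\ref{L:coerce1} but its proof, recording the sharper lower bound $G:=\langle\nu\nabla\mathbf{u}_h,\nabla\mathbf{u}_h\rangle+\langle\mathbf{u}_0\cdot\nabla\mathbf{u}_h,\mathbf{u}_h\rangle+\langle\mathbf{u}_h\cdot\nabla\mathbf{u}_0,\mathbf{u}_h\rangle+\gamma\langle\mathbf{u}_h,\mathbf{u}_h\rangle\ge\nu||\nabla\mathbf{u}_h||^2+\alpha_1||\mathbf{u}_h||^2$, obtained by keeping the full viscous term, discarding the nonnegative boundary term $\tfrac12\int_{\partial\Omega}u_0^ku^iu^in^k\,d\sigma$, and using $\langle\mathbf{u}_h\cdot\nabla\mathbf{u}_0,\mathbf{u}_h\rangle+\gamma||\mathbf{u}_h||^2>\alpha_1||\mathbf{u}_h||^2$. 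The target then becomes $G+T_1+S+T_3+T_4\ge\alpha_0||\mathbf{u}_h||_1^2+\tfrac16 S$, which I attack by bounding $|T_1|,|T_3|,|T_4|$ from above.

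The crucial step is the estimate of $T_1=-\langle\nu\nabla^2\mathbf{u}_h,\tau\mathbf{u}_0\cdot\nabla\mathbf{u}_h\rangle_{\tilde\Omega}$, where the hypothesis $m=\sup_\alpha\zeta(\alpha)/\alpha\le 4/C^2$ and the inverse estimate~(\ref{E:inverseestimate}) are consumed. Applying the inverse estimate element by element and substituting $\tau=\tfrac{h}{2|\mathbf{u}_0|}\zeta(\alpha)$ with $\alpha=\tfrac{h|\mathbf{u}_0|}{2\nu}$, one computes on each element $\nu\tau h^{-2}=\zeta(\alpha)/(4\alpha)\le m/4\le 1/C^2$, whence $||\nu\tau^{1/2}\nabla^2\mathbf{u}_h||_{\tilde\Omega}^2\le\nu||\nabla\mathbf{u}_h||^2$; this identity is precisely the reason the constant in the hypothesis is $4/C^2$. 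A Cauchy--Schwarz followed by a weighted Young inequality then gives $|T_1|\le\tfrac{\theta_1}{2}\nu||\nabla\mathbf{u}_h||^2+\tfrac{1}{2\theta_1}S$, so that for any $\theta_1<2$ its gradient part is absorbed by the viscous term in $G$ with room to spare.

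For the two remaining terms I would use the boundedness of the base-flow gradient, $|\nabla_k u_0^i|\le N_{ki}$ as in Lemma~\ref{L:coerce1} (write $N$ for a bound on $|\nabla\mathbf{u}_0|$), together with the smallness assumption $\tau<\delta$, which yields $||\tau^{1/2}\mathbf{u}_h||_{\tilde\Omega}^2\le\delta||\mathbf{u}_h||^2$. Cauchy--Schwarz and Young then give $|T_3|\le\tfrac{\theta_3}{2}N^2\delta||\mathbf{u}_h||^2+\tfrac{1}{2\theta_3}S$ and $|T_4|\le\tfrac{\theta_4}{2}\gamma^2\delta||\mathbf{u}_h||^2+\tfrac{1}{2\theta_4}S$. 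Choosing $\theta_1=\theta_3=\theta_4=9/5$ makes the three $S$-contributions sum to exactly $\tfrac{1}{2\theta_1}+\tfrac{1}{2\theta_3}+\tfrac{1}{2\theta_4}=\tfrac56$, so that $S-\tfrac56S=\tfrac16S$ survives, while the gradient deficit is $\big(1-\tfrac{9}{10}\big)\nu||\nabla\mathbf{u}_h||^2=\tfrac{\nu}{10}||\nabla\mathbf{u}_h||^2>0$. Taking $\delta<\tfrac{10\alpha_1}{9(N^2+\gamma^2)}$ keeps the $||\mathbf{u}_h||^2$ contributions of $T_3,T_4$ strictly below the $\alpha_1||\mathbf{u}_h||^2$ available in $G$, and collecting everything yields the claim with $\alpha_0=\min\big(\tfrac{\nu}{10},\,\alpha_1-\tfrac{9}{10}(N^2+\gamma^2)\delta\big)>0$.

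I expect the main obstacle to be the $T_1$ estimate: it is the only place where the second-order viscous operator meets the first-order streamline derivative, and making the inverse-estimate constant $C$, the stabilization profile $\zeta(\alpha)$, and the bound $m\le 4/C^2$ align so that the viscous term \emph{exactly} controls $||\nu\tau^{1/2}\nabla^2\mathbf{u}_h||_{\tilde\Omega}^2$ is the heart of the matter. The treatment of $T_3$ and $T_4$ is then routine bookkeeping driven by the smallness of $\tau$, and the final balancing of the three Young parameters to leave precisely $\tfrac16 S$ is a matter of arithmetic once the $T_1$ bound is in hand.
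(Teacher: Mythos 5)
Your proposal is correct and follows essentially the same route as the paper's proof: the same split into the Galerkin part (handled via the intermediate bounds in the proof of Lemma~\ref{L:coerce1}) plus four stabilization terms, the same key step converting $\nu\tau\le h^2/C^2$ via the inverse estimate into $\nu^2\|\tau^{1/2}\nabla^2\mathbf{u}_h\|_{\tilde\Omega}^2\le\nu\|\nabla\mathbf{u}_h\|^2$, and the same Young-inequality bookkeeping leaving exactly $\tfrac16\|\tau^{1/2}\mathbf{u}_0\cdot\nabla\mathbf{u}_h\|_{\tilde\Omega}^2$. The only difference is cosmetic: you use uniform weights $\theta_1=\theta_3=\theta_4=9/5$ where the paper uses $\tfrac34+\tfrac13$, $1+\tfrac14$, $1+\tfrac14$ (both sum the absorbed fraction to $\tfrac56$), yielding slightly different but equally valid constants for the retained viscous part and for $\delta$.
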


\begin{proof}
	\begin{align}
		&a_{supg}(\mathbf{u}_h,\mathbf{u}_h)+c_{supg}(\mathbf{u}_h,\mathbf{u}_h) \label{E:supgcoerce2} \\
		\nonumber =&\left<\nu \nabla\mathbf{u}_h,\nabla\mathbf{u}_h\right> +  \left< \mathbf{u}_0\cdot\nabla\mathbf{u}_h,\mathbf{u}_h\right> \\ 
		\nonumber &+\left< \mathbf{u}_h\cdot\nabla\mathbf{u}_{0},\mathbf{u}_h\right>+\left< \gamma\mathbf{u}_h,\mathbf{u}_h\right>\\
		\nonumber	&-\left< \nu\nabla^2\mathbf{u}_h,\tau \mathbf{u}_0\cdot\nabla\mathbf{u}_h\right>_{\tilde{\Omega}} + \left< \mathbf{u}_0\cdot\nabla\mathbf{u}_h,\tau \mathbf{u}_0\cdot\nabla\mathbf{u}_h\right>_{\tilde{\Omega}} \\
		\nonumber &+
		\left< \mathbf{u}_h\cdot\nabla\mathbf{u}_{0},\tau \mathbf{u}_0\cdot\nabla\mathbf{u}_h\right>_{\tilde{\Omega}} +
		\left< \gamma\mathbf{u}_h,\tau \mathbf{u}_0\cdot\nabla\mathbf{u}_h\right>_{\tilde{\Omega}}.
	\end{align}
	
	First we recall the inverse estimate relationship in equation~(\ref{E:inverseestimate}), 
	\begin{align}
		&\nu\tau = \frac{\nu h}{2|\mathbf{u}_0|}\zeta(\alpha)=\frac{1}{4}h^2\frac{\zeta(\alpha)}{\alpha} \leq \frac{1}{4}h^2m \leq \frac{h^2}{C^2} \\
		\nonumber \Rightarrow &~ \nu^2||\tau^{1/2}\nabla^2 \mathbf{u}_h||^2_{\tilde{\Omega}} \leq \nu||\nabla\mathbf{u}_h||^2
	\end{align}
	
	With this inequality, $-\left< \nu\nabla^2\mathbf{u}_h,\tau\mathbf{u}_0\cdot\nabla\mathbf{u}_h\right>_{\tilde{\Omega}}$ is estimated,
	\begin{align}
		&|-\left< \nu\nabla^2\mathbf{u}_h,\tau \mathbf{u}_0\cdot\nabla\mathbf{u}_h\right>_{\tilde{\Omega}}| \\
		\nonumber \leq& \frac{3}{4} \nu^2||\tau^{1/2}\nabla^2 \mathbf{u}_h||^2_{\tilde{\Omega}} + \frac{1}{3} ||\tau^{1/2}\mathbf{u}_0\cdot\nabla\mathbf{u}_h||^2_{\tilde{\Omega}} \\
		\nonumber \leq& \frac{3}{4} \nu||\nabla \mathbf{u}_h||^2 + \frac{1}{3} ||\tau^{1/2}\mathbf{u}_0\cdot\nabla\mathbf{u}_h||^2_{\tilde{\Omega}}.
	\end{align}
	
	Then $\left<\mathbf{u}_h\cdot\nabla\mathbf{u}_{0},\tau \mathbf{u}_0\cdot\nabla\mathbf{u}_h\right>_{\tilde{\Omega}} $ is estimated,
	\begin{align}
		&| \left<\mathbf{u}_h\cdot\nabla\mathbf{u}_{0},\tau \mathbf{u}_0\cdot\nabla\mathbf{u}_h\right>_{\tilde{\Omega}} | \\
		\nonumber \leq & ||\tau^{1/2}\mathbf{u}_h\cdot\nabla\mathbf{u}_0||^2 + \frac{1}{4}||\tau^{1/2}\mathbf{u}_0\cdot\nabla\mathbf{u}_h||^2_{\tilde{\Omega}} \\
		\nonumber \leq & ||N\tau^{1/2}\mathbf{u}_h||^2+\frac{1}{4}||\tau^{1/2}\mathbf{u}_0\cdot\nabla\mathbf{u}_h||^2_{\tilde{\Omega}}.
	\end{align}
	Here $N$ is a constant due to the boundedness of $\nabla \mathbf{u}_0$.
	
	Next we estimate $ \left< \gamma\mathbf{u},\tau \mathbf{u}_0\cdot\nabla\mathbf{u}_h\right>_{\tilde{\Omega}}$,
	\begin{align}
		&|\left<\gamma\mathbf{u}_h,\tau \mathbf{u}_0\cdot\nabla\mathbf{u}_h\right>_{\tilde{\Omega}}| \\
		\nonumber \leq & ||\gamma\tau^{1/2}\mathbf{u}_h||^2 + \frac{1}{4}||\tau^{1/2}\mathbf{u}_0\cdot\nabla\mathbf{u}_h||^2_{\tilde{\Omega}}.
	\end{align}

	Subsituting all the above estimations into equation~(\ref{E:supgcoerce2}) and use the results in lemma~\ref{L:coerce1},
	\begin{align}
		&a_{supg}(\mathbf{u}_h,\mathbf{u}_h)+c_{supg}(\mathbf{u}_h,\mathbf{u}_h) \\
		\nonumber \geq& \frac{\nu}{4}||\nabla \mathbf{u}_h||^2 + \frac{1}{6}||\tau^{1/2}\mathbf{u}_0\cdot\nabla\mathbf{u}_h||_{\tilde{\Omega}}^2+ \left< (\alpha_1-N^2\tau-\gamma^2\tau)\mathbf{u}_h,\mathbf{u}_h\right>_{\tilde{\Omega}}.
	\end{align}
	Here $\alpha_1$ is the estimation constant in lemma~\ref{L:coerce1}.
	It follows that when
	\begin{align}
		\tau <&   \frac{\alpha_1}{N^2+\gamma^2}=\delta, \\
		\nonumber \exists~\alpha_0>0,&~a_{supg}(\mathbf{u}_h,\mathbf{u}_h)+c_{supg}(\mathbf{u}_h,\mathbf{u}_h) \geq \alpha_0 ||\mathbf{u}_h||_1^2 + \frac{1}{6}||\tau^{1/2} \mathbf{u}_0\cdot\nabla\mathbf{u}_h||_{\tilde{\Omega}}^2.
	\end{align}

\end{proof}

\begin{remark}
	For typical flow problem $\alpha_1, \gamma$ can be chosen on the order of $N\sim|\nabla\mathbf{u}_0|$, so the condition for $\tau$ is not restrictive.
\end{remark}

\begin{remark}
	The introduction of '$\gamma$' terms ensures the coercivity of the linear problem. The corresponding eigenvalue problem is,
	\begin{align}
		a_{supg}(\mathbf{u},\mathbf{w})+c_{supg}(\mathbf{u},\mathbf{w})&=m_{supg}(\tilde{\lambda}\mathbf{u},\mathbf{w}), \label{E:lambdaproblem} \\
		\nonumber m_{supg}(\mathbf{u},\mathbf{w}) &= -M(\mathbf{u},\mathbf{w}).
	\end{align}
	Here $M(\mathbf{u},\mathbf{w})$ is a bilinear form defined in equation~
	
	The discretized problem is equivalent to a matrix eigenvalue problem,
	\begin{align}
		H\mathbf c- \gamma M\mathbf c = \tilde{\lambda}_hM\mathbf c.
	\end{align}
	Here $H$, $M$ are matrices defined in equation~(\ref{E:weakformSUPG}).
	It is apparant that the eigenvalue is shifted by $\gamma$ from the original problem~\ref{E:eigenmatrix}, and that the eigenfunctions remain unchanged.  
\end{remark}

\begin{lemma}
	Let
	\begin{align}
		a_0(\mathbf{u},\mathbf{w}) &= \left<\nabla_ku^i,\nabla_kw^i\right>,~\forall~\mathbf{u},\mathbf{w}\in \mathbf{V}, \\ \nonumber
		b(\mathbf w,p) &= \left< \nabla\cdot\mathbf w, p\right>,~\forall~\mathbf{w}\in\mathbf{V},p\in Q.
	\end{align}
	Define a linear system in its weak form, $\mathbf{f}\in \mathbf{V}$,
	\begin{align}
		a_0(\mathbf{u},\mathbf{w})+b(\mathbf w,p) &= \left<\mathbf{f},\mathbf w\right>,~\forall\mathbf{w} \in \mathbf{V},\label{E:LBB-1}\\
		\nonumber
		b(\mathbf u,q) &= 0,~\forall q\in Q,
	\end{align}
	Here the binlinear form $b(\mathbf{u},q)$ satisfies the B.B. condition such that,
	\begin{align}
		\nonumber &\exists~ \beta > 0,~ 
		\inf_{q\in Q, q\ne 0} \sup_{\mathbf{u}\in\mathbf{V},\mathbf{u}\ne\mathbf{0}} \frac{|b(\mathbf{u},q)|}{||\mathbf{u}||_{\mathbf{V}} ||q||_Q} \geq \beta.
	\end{align}
	The corresponding discretized weak form is,
	\begin{align}
		a_0(\mathbf{u}_h,\mathbf{w}_h)+b(\mathbf w_h,p_h) &= \left<\mathbf{f}_h,\mathbf w_h\right>,~\forall\mathbf{w}_h \in \mathbf{V}_h,\label{E:LBB0}\\
		\nonumber
		b(\mathbf u_h,q_h) &= 0,~\forall q_h\in Q_h.
	\end{align}
	Then whether or not the B.B. condition is satisfied or not in equations~(\ref{E:LBB0}), there is a unique $\mathbf{u}_h$ that satifies the discretized equations and,
	\begin{align}
		||\mathbf{u}-\mathbf{u}_h||_1 \leq C\inf_{\mathbf{u}^I\in\mathbf{V}_h,p^I\in Q_h}(||\mathbf{u}-\mathbf{u}^I||_1+||p-p^I||).
	\end{align}
	Here $C$ is a constant that depends on $\mathbf{u}$ and $p$.
	
\end{lemma}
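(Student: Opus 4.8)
The plan is to bypass the B.B.\ condition entirely by recasting the discrete problem~(\ref{E:LBB0}) on the discretely divergence-free subspace $\mathbf{V}_h^0$, where the pressure unknown disappears. For any test function $\mathbf{w}_h \in \mathbf{V}_h^0$ one has $b(\mathbf{w}_h, p_h) = \langle \nabla \cdot \mathbf{w}_h, p_h\rangle = 0$ because $p_h \in Q_h$, so the first equation of~(\ref{E:LBB0}) collapses to the symmetric problem: find $\mathbf{u}_h \in \mathbf{V}_h^0$ with $a_0(\mathbf{u}_h, \mathbf{w}_h) = \langle \mathbf{f}_h, \mathbf{w}_h\rangle$ for all $\mathbf{w}_h \in \mathbf{V}_h^0$. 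The form $a_0$ is the $H^1$-seminorm form; by the Poincar\'e inequality, available because elements of $\mathbf{V}$ vanish on the inlet and walls (just as the viscous term is treated in Lemma~\ref{L:coerce1}), it is both bounded and coercive on $\mathbf{V}$ and hence on the closed subspace $\mathbf{V}_h^0$. Lax--Milgram then furnishes a unique $\mathbf{u}_h \in \mathbf{V}_h^0$, and this step uses nothing about the inf-sup constant of the pair $(\mathbf{V}_h, Q_h)$, only that $\mathbf{V}_h^0$ is a well-defined subspace. This settles existence and uniqueness of the velocity whether or not B.B.\ holds.

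For the error bound I would first establish a Galerkin orthogonality that carries a pressure-consistency defect. Testing the continuous problem~(\ref{E:LBB-1}) against $\mathbf{w}_h \in \mathbf{V}_h^0 \subset \mathbf{V}$ gives $a_0(\mathbf{u}, \mathbf{w}_h) + b(\mathbf{w}_h, p) = \langle \mathbf{f}, \mathbf{w}_h\rangle$. The subtlety here, and the crux of the argument, is that $\mathbf{w}_h$ is only \emph{discretely} divergence-free, so $b(\mathbf{w}_h, p)$ does \emph{not} vanish for the exact $p \in Q$; it vanishes only when tested against $Q_h$. Exploiting $b(\mathbf{w}_h, q_h) = 0$ for all $q_h \in Q_h$, I would subtract an arbitrary $p^I \in Q_h$ and then subtract the discrete equation to obtain
\[
a_0(\mathbf{u} - \mathbf{u}_h, \mathbf{w}_h) = -\,b(\mathbf{w}_h, p - p^I), \qquad \forall\, \mathbf{w}_h \in \mathbf{V}_h^0,\ \forall\, p^I \in Q_h,
\]
assuming a consistent discrete right-hand side, i.e.\ $\langle \mathbf{f}_h - \mathbf{f}, \mathbf{w}_h\rangle = 0$ on $\mathbf{V}_h^0$.

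The estimate then follows from a C\'ea-type argument. For arbitrary $\mathbf{u}^I \in \mathbf{V}_h^0$, set $\mathbf{w}_h = \mathbf{u}_h - \mathbf{u}^I$ and split $\mathbf{u}_h - \mathbf{u}^I = (\mathbf{u}_h - \mathbf{u}) + (\mathbf{u} - \mathbf{u}^I)$ inside the coercive form. The orthogonality identity converts the first contribution into $b(\mathbf{w}_h, p - p^I)$, which the boundedness of $b$ controls by $C\|\mathbf{w}_h\|_1\|p - p^I\|$, while the boundedness of $a_0$ (with constant $M$) controls the second by $M\|\mathbf{u} - \mathbf{u}^I\|_1\|\mathbf{w}_h\|_1$. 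Dividing by $\|\mathbf{w}_h\|_1$, invoking coercivity $\alpha_0\|\mathbf{w}_h\|_1^2 \le a_0(\mathbf{w}_h, \mathbf{w}_h)$, and applying the triangle inequality $\|\mathbf{u} - \mathbf{u}_h\|_1 \le \|\mathbf{u} - \mathbf{u}^I\|_1 + \|\mathbf{u}^I - \mathbf{u}_h\|_1$ yields the claimed bound after taking the infimum over $\mathbf{u}^I$ and $p^I$.

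The main obstacle is the passage from the infimum over the discretely divergence-free subspace $\mathbf{V}_h^0$, which is what the C\'ea argument naturally produces, to the infimum over all of $\mathbf{V}_h$ as written in the statement. Since $\mathbf{u}$ is genuinely divergence-free, $\mathbf{V}_h^0$ is its natural approximation space, and for the explicitly constructed divergence-free basis of Section~\ref{S:MathModel} the two best-approximation quantities are comparable; in a general setting one would bridge them with a Fortin-type operator, which is precisely where a discrete inf-sup would re-enter. Accordingly I would either state the estimate with the infimum over $\mathbf{V}_h^0$, or let the constant $C$ absorb this comparison, consistent with the stated dependence of $C$ on $\mathbf{u}$ and $p$. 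Handling the consistency term $b(\mathbf{w}_h, p - p^I)$ correctly, rather than erroneously discarding it, is the one place where care is essential; everything else is routine Lax--Milgram and Cauchy--Schwarz bookkeeping.
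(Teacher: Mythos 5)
Your existence/uniqueness argument and your C\'ea-type estimate on the discretely divergence-free subspace are both correct and match the first half of the paper's proof. But there is a genuine gap in the second half, and you have located it yourself without closing it: the C\'ea argument with $\mathbf{u}^I \in \mathbf{V}_h^0$ only yields
\begin{align}
\nonumber ||\mathbf{u}-\mathbf{u}_h||_1 \leq C\inf_{\mathbf{u}^I\in\mathbf{V}_h^0,\,p^I\in Q_h}\left(||\mathbf{u}-\mathbf{u}^I||_1+||p-p^I||\right),
\end{align}
whereas the lemma asserts the infimum over all of $\mathbf{V}_h$. That strengthening is not a cosmetic detail one can ``let the constant absorb'': the constant in the lemma is allowed to depend on $\mathbf{u}$ and $p$ but must be uniform in $h$, and the uniform comparability of the two best-approximation errors is precisely the nontrivial content here. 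Indeed, the sole purpose of this lemma in the paper is to feed Corollary~\ref{C:interpolation}, which says that a divergence-free field admits a \emph{weakly-divergence-free} interpolant whose error is controlled by best approximation from the \emph{full} spaces $\mathbf{V}_h\times Q_h$; with the infimum taken over $\mathbf{V}_h^0$ that corollary becomes vacuous. Deferring the passage to a Fortin operator also defeats the stated point of the lemma, which is that no discrete B.B.\ condition is needed.

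The paper closes this gap with a different device. It replaces $Q_h$ by the quotient $\hat{Q}_h=Q_h/Ker(B_h^T)$, on which the discrete inf-sup condition holds automatically, shows $Ker(\hat B_h)=Ker(B_h)=\mathbf{V}_h^0$, and then, for an \emph{arbitrary} interpolant $\mathbf{u}^I\in\mathbf{V}_h$, decomposes $\mathbf{u}_h-\mathbf{u}^I=\mathbf{u}_0+\hat{\mathbf{u}}$ with $\hat{\mathbf{u}}\in\mathbf{V}_h^0$ and $\mathbf{u}_0$ the inf-sup-stable lift satisfying $b(\mathbf{u}_0,\hat q_h)=b(\mathbf{u}-\mathbf{u}^I,\hat q_h)$. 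The component $\mathbf{u}_0$ is bounded by $C\beta^{-1}||\mathbf{u}-\mathbf{u}^I||_1$ using the inf-sup constant of the tweaked problem, and the component $\hat{\mathbf{u}}$ is bounded by essentially your coercivity argument, picking up the $||p-p^I||$ term plus an extra $a_0(\mathbf{u}_0,\hat{\mathbf{u}})$ contribution. Summing gives the bound with the infimum over $\mathbf{V}_h\times Q_h$. To repair your proof you would need to supply this decomposition (or an equivalent construction of a discretely divergence-free interpolant from an arbitrary one); as written, your argument proves a strictly weaker statement.
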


\begin{proof}
	It is well-known that equation~(\ref{E:LBB-1}) has a unique solution $(\mathbf{u},p)\in \mathbf{V}\times Q$, due to the coercivity of $a_0(\cdot,\cdot)$ and the satisfaction of the B.B condition in $b(\cdot,\cdot)$.\\
	
	The solution $\mathbf{u}_h$ of equation~(\ref{E:LBB0}) is in subspace $\mathbf{V}_h^0=Ker(B_h)$, where 
	\begin{equation}
	\left<B_h \mathbf{u}_h,q_h \right>=b(\mathbf{u}_h,q_h),~\forall~\mathbf{w}_h \in \mathbf{V}_h,q_h\in Q_h.
	\end{equation}
	Thus by restricting the trial function $\mathbf{w}_h\in \mathbf{V}_h^0$,
	\begin{align}
		a_0(\mathbf{u}_h,\mathbf{w}_h) &= \left<\mathbf{f},\mathbf w_h\right>,~\forall\mathbf{w}_h \in \mathbf{V}_h^0.
	\end{align}
	Then there exists a unique $\mathbf{u}_h\in\mathbf{V}_h$ that satisfies equation~(\ref{E:LBB0}). Note $p_h$ may not have unique solution.\\
	
	Next we slightly tweak the discretized problem such that  space $Q_h$ is replaced by $\hat{Q}_h=Q_h/Ker(B_h^T)$, ,
	\begin{align}
		a_0(\mathbf{u}_h,\mathbf{w}_h)+b(\mathbf w_h,\hat{p}_h) &= \left<\mathbf{f},\mathbf w_h\right>,~\forall\mathbf{w}_h \in \mathbf{V}_h,\label{E:LBBtweak}\\
		\nonumber
		b(\mathbf u_h,\hat{q}_h) &= 0,~\forall \hat{q}_h\in \hat{Q}_h,
	\end{align}
	It is straight-forward to check that the B.B. condition is satisfied in equation~(\ref{E:LBBtweak}).
	Therefore, there exists a unique pair $(\mathbf{u}_h,\hat{p}_h)\in\mathbf{V}_h\times\hat{Q}_h$ as the solution of equation~(\ref{E:LBBtweak}). Apparantly $\mathbf{u}_h$ is also the unique solution of equation~(\ref{E:LBB0}).\\
	
	It is evident that the following relationship also holds,
	\begin{align}
		a_0(\mathbf{u},\mathbf{w}_h)+b(\mathbf w_h,p) &= \left<\mathbf{f},\mathbf w_h\right>,~\forall\mathbf{w}_h \in \mathbf{V}_h,\label{E:LBBtweak2}\\
		\nonumber
		b(\mathbf u,\hat{q}_h) &= 0,~\forall \hat{q}_h\in \hat{Q}_h,
	\end{align}
	Subtracting equation~(\ref{E:LBBtweak}) from equation~(\ref{E:LBBtweak2}), we have,
	\begin{align}
		a_0(\mathbf{u}_h-\mathbf{u}^I,\mathbf{w}_h)+b(\mathbf w_h,p_h-p^I) &= a_0(\mathbf{u}-\mathbf{u}^I,\mathbf{w}_h)+b(\mathbf w_h,p-p^I),~\forall\mathbf{w}_h \in \mathbf{V}_h,\label{E:LBBtweak2}\\
		\nonumber
		b(\mathbf u_h-\mathbf{u}^I,\hat{q}_h) &= b(\mathbf u-\mathbf{u}^I,\hat{q}_h),~\forall \hat{q}_h\in \hat{Q}_h.
	\end{align}
	Here $\mathbf{u}^I\in\mathbf{V}_h$, $p^I\in Q_h$ are any interpolations of $\mathbf{u}$ and $p$. Note $p^I$ is not in the truncated subspace $\hat{Q}_h$.
	Becuase the B.B. condition holds in equation~(\ref{E:LBBtweak2}),
	\begin{align}
		\exists \mathbf{u}_0 \in \mathbf{V}_h, \hat{\mathbf{u}} &\in  Ker(\hat{B}_h), \\ \nonumber
		\mathbf{u}_h-\mathbf{u}^I &= \mathbf{u}_0+\hat{\mathbf{u}},\label{E:LBBu0} \\ 
		\nonumber
		b(\mathbf u_0,\hat{q}_h) &= b(\mathbf u-\mathbf{u}^I,\hat{q}_h),~\forall \hat{q}_h\in \hat{Q}_h.
	\end{align}
	Here $\hat B_h$ is defined as,
	\begin{align}
		\nonumber \left< \hat B_h \mathbf{u}_h,\hat q_h \right>=b(\mathbf{u}_h,\hat q_h),~\forall~\mathbf{u}_h \in \mathbf{V}_h,\hat q_h\in \hat Q_h.
	\end{align}
	If the B.B condition does not hold in equation~(\ref{E:LBB0}), $Ker(B_h^T) \ne \Phi$, therefore any $p^I\in Q_h$ can be expressed as,
	\begin{align}
		p^I = \hat{p} + \tilde{p},~\hat p \in \hat Q_h,\tilde{p}\in Ker(B_h^T).
	\end{align}
	\\
	We now demonstrate that $Ker(\hat{B}_h)=Ker(B_h)$.
	
	First, if ${\mathbf{u}}_h \in Ker(B_h)$,
	\begin{align}
		\left< B_h\mathbf{u}_h,q_h\right>=0,~\forall q_h\in Q_h.
	\end{align}
	Because $\hat{Q}_h \subseteq Q_h$,
	it follows $Ker(B_h) \subseteq Ker(\hat{B}_h)$.
	
	Next, if ${\mathbf{u}}_h \in Ker(\hat B_h)$,
	\begin{align}
		\left< \hat B_h\mathbf{u}_h,\hat q_h\right>=0,~\forall \hat q_h\in \hat Q_h.
	\end{align}
	Pick any $q_h\in Q_h$, $q_h=\hat{q}_h+\tilde{q}_h$, $\hat{q}_h\in\hat Q_h,\tilde{q}_h\in Ker(B_h^T),$
	\begin{align}
		\left< B_h\mathbf{u}_h,q_h\right>=\left< B_h\mathbf{u}_h,\hat{q}_h+\tilde{q}_h\right>=0,~\forall q_h\in Q_h.
	\end{align}
	Then we have $Ker(\hat B_h) \subseteq Ker({B}_h)$.
	
	Therefore $Ker(\hat B_h) = Ker({B}_h)$.
	\\
	
	From the above discussion we have,
	\begin{align}                	
		&a_0(\hat{\mathbf{u}},\mathbf{w}_h)+b(\mathbf w_h,p_h-p^I) = \\
		\nonumber
		&a_0(\mathbf{u}-\mathbf{u}^I,\mathbf{w}_h)+b(\mathbf w_h,p-p^I)-a_0(\mathbf{u}_0,\mathbf{w}_h),~\forall\mathbf{w}_h \in \mathbf{V}_h,\mathbf{u}^I\in\mathbf{V}_h,p^I\in Q_h.
	\end{align}
	Let $\mathbf{w}_h=\hat{\mathbf{u}}$ and notice $b(\hat{\mathbf{u}},p_h-p^I)=0$, since $\hat{\mathbf{u}}\in Ker(\hat B_h)=Ker(B_h)$.
	It follows,
	\begin{align}
		a_0(\hat{\mathbf{u}},\hat{\mathbf{u}}) = a_0(\mathbf{u}-\mathbf{u}^I,\hat{\mathbf{u}})+b(\hat{\mathbf{u}},p-p^I)-a_0(\mathbf{u}_0,\hat{\mathbf{u}})
	\end{align}
	It is known that B.B. condition holds in equation~(\ref{E:LBBtweak}), thus we have
	\begin{align}
		\exists~\beta>0,~
		||\mathbf{u}_0||_1 \leq \beta^{-1}||\hat B_h\mathbf{u}_0||.
	\end{align}
	Therefore, from equation~(\ref{E:LBBu0}),
	\begin{align}
		||\mathbf{u}_0||_1 \leq \beta^{-1}\sup_{\hat q_h\in\hat Q_h}\frac{|b(\mathbf{u}_0,\hat q_h)|}{||\hat q_h||} = \beta^{-1}\sup_{\hat q_h\in\hat Q_h}\frac{|b(\mathbf{u}-\mathbf{u}^I,\hat q_h)|}{||\hat q_h||} \leq C\beta^{-1}||\mathbf{u}-\mathbf{u}^I||_1 \label{E:u0estimate}
	\end{align}
	
	Then,
	\begin{align}
		\alpha||\hat{\mathbf{u}}||_1^2 \leq a_0(\hat{\mathbf{u}},\hat{\mathbf{u}})
		\leq & ||a_0||\cdot||\mathbf{u}-\mathbf{u}^I||_1\cdot||\hat{\mathbf{u}}||_1+C||p-p^I||\cdot||\hat{\mathbf{u}}||_1+ \\ \nonumber &C\beta^{-1}||a_0||\cdot||\mathbf{u}-\mathbf{u}^I||_1\cdot||\hat{\mathbf{u}}||_1
	\end{align}
	
	\begin{align}
		\nonumber	\Rightarrow 
		||\hat{\mathbf{u}}||_1 \leq C(||\mathbf{u}-\mathbf{u}^I||_1+||p-p^I||)
	\end{align}
	Combining equation~(\ref{E:u0estimate}),
	\begin{align}
		||\mathbf{u}_h-\mathbf{u}_I||_1 \leq C(||\mathbf{u}-\mathbf{u}^I||_1+||p-p^I||).
	\end{align}
	Finally,
	\begin{align}
		||\mathbf{u}_h-\mathbf{u}||_1 \leq C\inf_{\mathbf{u^I\in\mathbf{V}_h},p^I\in Q_h}(||\mathbf{u}-\mathbf{u}^I||_1+||p-p^I||).
	\end{align}
	Here constant $C$ may be of different value in each equation for simplicity.
\end{proof}

\begin{cor} \label{C:interpolation}
	For any $\mathbf{u}\in\mathbf{V}^0$, where $\mathbf{V}^0$ is the divergence-free subspace of $\mathbf{V}$, there exists an interpolation $\tilde{\mathbf{u}}^I\in\mathbf{V}_h^0$ such that,
	\begin{align}
		||\tilde{\mathbf{u}}^I-\mathbf{u}||_1 \leq C\inf_{\mathbf{u^I\in\mathbf{V}_h},p^I\in Q_h}(||\mathbf{u}-\mathbf{u}^I||_1+||p-p^I||).
	\end{align} 
\end{cor}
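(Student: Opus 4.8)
The plan is to obtain $\tilde{\mathbf{u}}^I$ as the discrete velocity produced by the preceding lemma, applied to a saddle-point problem whose exact velocity solution is the given $\mathbf{u}$. First I would exploit that $\mathbf{u}\in\mathbf{V}^0$ is \emph{exactly} divergence-free, so that $b(\mathbf{u},q)=\left<\nabla\cdot\mathbf{u},q\right>=0$ for every $q\in Q$; this is precisely the constraint equation in (\ref{E:LBB-1}). Thus $\mathbf{u}$ is already a candidate for the velocity component of a continuous solution, and it only remains to manufacture a compatible right-hand side.

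Next I would fix the pressure $p\in Q$ naturally associated with $\mathbf{u}$ (the mode pressure in the application) and define the functional $\mathbf{f}\in\mathbf{V}'$ by $\left<\mathbf{f},\mathbf{w}\right>:=a_0(\mathbf{u},\mathbf{w})+b(\mathbf{w},p)$ for all $\mathbf{w}\in\mathbf{V}$. By construction the pair $(\mathbf{u},p)$ satisfies the momentum equation of (\ref{E:LBB-1}), and by the previous step it satisfies the incompressibility equation as well. Since $a_0$ is coercive on $\mathbf{V}$ and $b$ satisfies the B.B. condition, (\ref{E:LBB-1}) has a unique solution, which therefore must be $(\mathbf{u},p)$. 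In other words, the given divergence-free field $\mathbf{u}$ is exactly the continuous velocity solution of the Stokes-type problem driven by this $\mathbf{f}$.

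I would then invoke the preceding lemma with the \emph{same} data $\mathbf{f}$. The lemma supplies a unique discrete solution $\mathbf{u}_h$ of (\ref{E:LBB0}); crucially, the lemma's construction places $\mathbf{u}_h$ in $Ker(B_h)=\mathbf{V}_h^0$, so it is weakly divergence-free in the required sense. Setting $\tilde{\mathbf{u}}^I:=\mathbf{u}_h\in\mathbf{V}_h^0$, the error estimate of the lemma reads $||\tilde{\mathbf{u}}^I-\mathbf{u}||_1\le C\inf_{\mathbf{u}^I\in\mathbf{V}_h,\,p^I\in Q_h}(||\mathbf{u}-\mathbf{u}^I||_1+||p-p^I||)$, which is exactly the claimed bound.

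I expect essentially no analytic obstacle here, since the heavy lifting (existence, uniqueness, and the quasi-optimal estimate even when the discrete B.B. condition fails) is already carried by the preceding lemma. The only points requiring care are bookkeeping ones: verifying that membership in $\mathbf{V}^0$ yields the exact constraint $b(\mathbf{u},q)=0$ (so that $\mathbf{u}$, and not merely a perturbation of it, is the continuous velocity), and confirming that the discrete solution the lemma returns genuinely lies in the weakly-divergence-free space $\mathbf{V}_h^0$ rather than in all of $\mathbf{V}_h$. Choosing or identifying the reference pressure $p$ is the only mild ambiguity, and any $p\in Q$ works, since it is absorbed into $\mathbf{f}$.
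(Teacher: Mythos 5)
Your proposal is correct and is essentially the argument the paper intends: the corollary is obtained by feeding the data that makes $(\mathbf{u},p)$ the exact solution of the Stokes-type system into the preceding lemma and taking $\tilde{\mathbf{u}}^I$ to be the resulting discrete solution, which lies in $Ker(B_h)=\mathbf{V}_h^0$ and inherits the quasi-optimal bound. The paper gives no separate proof beyond this, so your careful bookkeeping (exact constraint $b(\mathbf{u},q)=0$, membership of $\mathbf{u}_h$ in $\mathbf{V}_h^0$, and the role of the reference pressure $p$) matches and slightly sharpens what is implicit there.
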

This corollary shows that the divergence-free vector field $\mathbf{u}$ can be approximated by a weakly-divergence-free interpolation with the above error bounds. Whether or not the B.B condition is satisfied or not does not matter.

\begin{theorem}
	Given $\mathbf{f}\in\mathbf{V}$,
	there exists a unique pair $(\mathbf{u},p)\in \mathbf{V}\times Q$ such that,
	\begin{align}
		a(\mathbf{u},\mathbf{w}) + c(\mathbf{u},\mathbf{w})+b(\mathbf{w},p) &= \left<\mathbf{f},\mathbf{w} \right>,
		~\forall\mathbf{w} \in \mathbf{V}, \label{E:LBB1} \\
		\nonumber b(\mathbf{u},q) &= 0,~\forall q\in Q.
	\end{align}
	Here the binlinear form $b(\mathbf{u},q)$ satisfies the B.B. condition such that,
	\begin{align}
		\nonumber &\exists~ \beta > 0,~ 
		\inf_{q\in Q, q\ne 0} \sup_{\mathbf{u}\in\mathbf{V},\mathbf{u}\ne\mathbf{0}} \frac{|b(\mathbf{u},q)|}{||\mathbf{u}||_{\mathbf{V}} ||q||_Q} \geq \beta.
	\end{align}

	Then the following discretized linear system also has a unique solution $\mathbf{u}_h\in \mathbf{V}_h^0$,
	\begin{align}
		a_{supg}(\mathbf{u}_h,\mathbf{w}_h) + c_{supg}(\mathbf{u}_h,\mathbf{w}_h) =  \left<\mathbf{f},\mathbf{w}_h\right>+\left<\mathbf{f},\tau \mathbf{u}_0\cdot\nabla\mathbf{w}_h \right>_{\tilde{\Omega}}
		,~\forall\mathbf{w}_h \in \mathbf{V}_h. \label{E:LBB2}
	\end{align}
	Here the discretized binlinear form $b(\mathbf{u}_h,q_h)$ may not satisfy the B.B condition.
	
	If we assume $\mathbf{V}_h$ to be continous piecewise polynomials of degree $k$ and $Q_h$ be of degree $k-1$,
	the following esitmation holds,
	\begin{align}
		||\mathbf{u}-\mathbf{u}_h||_1^2 \leq C_{u,p}h^{2}.
	\end{align}
	Here $C_{u,p}$ is a function of $\mathbf{u}$ and $p$.
	
\end{theorem}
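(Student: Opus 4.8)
The plan is to read \eqref{E:LBB2} as a coercive finite-dimensional variational problem posed on the weakly-divergence-free space $\mathbf{V}_h^0$, and to establish the error estimate by a Strang-type argument that splits the error into a best-approximation contribution, handled through the interpolation result of Corollary~\ref{C:interpolation}, and a consistency (variational-crime) contribution arising because the stabilization drops the pressure gradient and $\mathbf{V}_h^0$ is only weakly divergence-free.

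First I would establish existence and uniqueness. Restricting the test functions in \eqref{E:LBB2} to $\mathbf{w}_h\in\mathbf{V}_h^0$, the left-hand side is precisely the form $a_{supg}+c_{supg}$ of Lemma~\ref{L:coercivitySUPG}, which is coercive on $\mathbf{V}_h^0$ once $\tau<\delta$, while the right-hand side is a bounded linear functional on the finite-dimensional space $\mathbf{V}_h^0$. A coercive bilinear form on a finite-dimensional space is injective, hence bijective, so Lax--Milgram yields a unique $\mathbf{u}_h\in\mathbf{V}_h^0$.

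Next I would decompose the error. Let $\tilde{\mathbf{u}}^I\in\mathbf{V}_h^0$ be the weakly-divergence-free interpolant of the exact velocity $\mathbf{u}\in\mathbf{V}^0$ supplied by Corollary~\ref{C:interpolation}, and set $\mathbf{e}_h:=\tilde{\mathbf{u}}^I-\mathbf{u}_h\in\mathbf{V}_h^0$, so that $\mathbf{u}-\mathbf{u}_h=(\mathbf{u}-\tilde{\mathbf{u}}^I)+\mathbf{e}_h$. Applying coercivity to $\mathbf{e}_h$ and inserting \eqref{E:LBB2} gives
\begin{align}
\nonumber \alpha_0\|\mathbf{e}_h\|_1^2 &\le a_{supg}(\mathbf{e}_h,\mathbf{e}_h)+c_{supg}(\mathbf{e}_h,\mathbf{e}_h) \\
\nonumber &= a_{supg}(\tilde{\mathbf{u}}^I,\mathbf{e}_h)+c_{supg}(\tilde{\mathbf{u}}^I,\mathbf{e}_h)-\langle\mathbf{f},\mathbf{e}_h\rangle-\langle\mathbf{f},\tau\mathbf{u}_0\cdot\nabla\mathbf{e}_h\rangle_{\tilde{\Omega}}.
\end{align}
I would then replace $\tilde{\mathbf{u}}^I$ by $\mathbf{u}$, paying the interpolation term $a_{supg}(\tilde{\mathbf{u}}^I-\mathbf{u},\mathbf{e}_h)+c_{supg}(\tilde{\mathbf{u}}^I-\mathbf{u},\mathbf{e}_h)$, which continuity of the forms bounds by $C\big(\|\mathbf{u}-\tilde{\mathbf{u}}^I\|_1+\|\tau^{1/2}\mathbf{u}_0\cdot\nabla(\mathbf{u}-\tilde{\mathbf{u}}^I)\|_{\tilde{\Omega}}\big)\big(\|\mathbf{e}_h\|_1+\|\tau^{1/2}\mathbf{u}_0\cdot\nabla\mathbf{e}_h\|_{\tilde{\Omega}}\big)$. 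Substituting the continuous weak form \eqref{E:LBB1} and the elementwise strong form $-\nu\nabla^2\mathbf{u}+\mathbf{u}_0\cdot\nabla\mathbf{u}+\mathbf{u}\cdot\nabla\mathbf{u}_0+\gamma\mathbf{u}=\mathbf{f}+\nabla p$ collapses the remaining expression $a_{supg}(\mathbf{u},\mathbf{e}_h)+c_{supg}(\mathbf{u},\mathbf{e}_h)-\langle\mathbf{f},\mathbf{e}_h\rangle-\langle\mathbf{f},\tau\mathbf{u}_0\cdot\nabla\mathbf{e}_h\rangle_{\tilde{\Omega}}$ to the pure consistency residual $b(\mathbf{e}_h,p)+\langle\nabla p,\tau\mathbf{u}_0\cdot\nabla\mathbf{e}_h\rangle_{\tilde{\Omega}}$.

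The heart of the argument, and the step I expect to be the main obstacle, is controlling this residual. For the first piece I would use $b(\mathbf{e}_h,q_h)=0$ for all $q_h\in Q_h$ to write $b(\mathbf{e}_h,p)=b(\mathbf{e}_h,p-p^I)\le C\|\mathbf{e}_h\|_1\inf_{p^I\in Q_h}\|p-p^I\|$; for the second I would bound $\langle\nabla p,\tau\mathbf{u}_0\cdot\nabla\mathbf{e}_h\rangle_{\tilde{\Omega}}\le\|\tau^{1/2}\nabla p\|_{\tilde{\Omega}}\,\|\tau^{1/2}\mathbf{u}_0\cdot\nabla\mathbf{e}_h\|_{\tilde{\Omega}}$ and absorb the last factor into the $\tfrac{1}{6}\|\tau^{1/2}\mathbf{u}_0\cdot\nabla\mathbf{e}_h\|_{\tilde{\Omega}}^2$ reserve produced by the coercivity estimate via Young's inequality. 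Using Young's inequality everywhere to move the factors of $\|\mathbf{e}_h\|_1$ and the streamline seminorm of $\mathbf{e}_h$ to the left leaves $\|\mathbf{e}_h\|_1^2\le C\big(\|\mathbf{u}-\tilde{\mathbf{u}}^I\|_1^2+\|\tau^{1/2}\mathbf{u}_0\cdot\nabla(\mathbf{u}-\tilde{\mathbf{u}}^I)\|_{\tilde{\Omega}}^2+\inf_{p^I\in Q_h}\|p-p^I\|^2+\|\tau^{1/2}\nabla p\|_{\tilde{\Omega}}^2\big)$, after which a triangle inequality combines $\mathbf{e}_h$ with $\mathbf{u}-\tilde{\mathbf{u}}^I$ and Corollary~\ref{C:interpolation} plus standard degree-$k$ velocity and degree-$(k-1)$ pressure approximation estimates supply the rate. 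The delicate point to watch is the pressure-stabilization term $\|\tau^{1/2}\nabla p\|_{\tilde{\Omega}}^2$: naively $\tau$ can be $O(h)$ in the convection-dominated regime, which would only give $O(h^{1/2})$ in energy, so the claimed $O(h^2)$ for $\|\mathbf{u}-\mathbf{u}_h\|_1^2$ relies on the asymptotic observation that at fixed $\nu$ the element Péclet number $\alpha\to0$ as $h\to0$, whence $\tau=\tfrac{h^2}{12\nu}=O(h^2)$ and $\|\tau^{1/2}\nabla p\|_{\tilde{\Omega}}^2=O(h^2)$; verifying this scaling carefully, together with the smoothness of $p$, is what makes the stabilization consistency error benign and closes the estimate.
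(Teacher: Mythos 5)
Your proposal is correct and follows essentially the same route as the paper: coercivity of $a_{supg}+c_{supg}$ on $\mathbf{V}_h^0$ (Lemma~\ref{L:coercivitySUPG}) for well-posedness, an error equation against the weakly-divergence-free interpolant of Corollary~\ref{C:interpolation}, elimination of $b(\mathbf{e}_h,p)$ via orthogonality to $Q_h$, and control of the pressure-stabilization residual through the scaling $\tau\le Ch^2/\nu$ in both P\'eclet regimes, so that $\|\tau^{1/2}\nabla p\|_{\tilde{\Omega}}^2=O(h^2)$ (the paper gets this from $\tau=h^2/(4\nu\alpha)\le Ch^2/\nu$ when $\alpha$ is large, rather than from $\alpha\to0$ as $h\to0$, but the conclusion is the same). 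The one imprecision is your stated continuity bound for $a_{supg}(\tilde{\mathbf{u}}^I-\mathbf{u},\mathbf{e}_h)$: the term $-\langle\nu\nabla^2(\mathbf{u}-\tilde{\mathbf{u}}^I),\tau\mathbf{u}_0\cdot\nabla\mathbf{e}_h\rangle_{\tilde{\Omega}}$ is not controlled by $\|\mathbf{u}-\tilde{\mathbf{u}}^I\|_1$ (the inverse estimate does not apply to the non-discrete function $\mathbf{u}-\tilde{\mathbf{u}}^I$), so the extra quantity $\nu\|\tau^{1/2}\nabla^2(\mathbf{u}-\tilde{\mathbf{u}}^I)\|_{\tilde{\Omega}}$ must be carried along, as the paper does, where it is shown to be $O(h^{2k})$ or better and hence harmless.
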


\begin{proof}
	
	It is well-known that when B.B. condition is satisfied, equation~(\ref{E:LBB1}) has a unqiue solution because the coercivity holds true.
	
	The unique solution $(\mathbf{u},p)\in \mathbf{V}\times Q$ also satisfies,
	\begin{align}
		a_{supg}(\mathbf{u},\mathbf{w}_h) + c_{supg}(\mathbf{u},\mathbf{w}_h)+b(\mathbf{w}_h,p) &+ \label{E:LBB3} \\
		\nonumber \left<\tau \mathbf{u}_0\cdot\nabla\mathbf{w}_h ,\nabla p\right>_{\tilde{\Omega}} &= \left<\mathbf{f},\mathbf{w}_h\right>+\left<\mathbf{f},\tau \mathbf{u}_0\cdot\nabla\mathbf{w}_h \right>_{\tilde{\Omega}}
		,~\forall\mathbf{w}_h \in \mathbf{V}_h, \\
		\nonumber b(\mathbf{u},q_h) &= 0,~\forall q_h\in Q_h.
	\end{align}
	
	Combining equation~(\ref{E:LBB2}) and~(\ref{E:LBB3}), 
	\begin{align}
		a_{supg}(\mathbf{u}_h,\mathbf{w}_h) + c_{supg}(\mathbf{u}_h,\mathbf{w}_h) &= \\
		\nonumber  a_{supg}(\mathbf{u},\mathbf{w}_h) + c_{supg}(\mathbf{u},\mathbf{w}_h) &+ b(\mathbf{w}_h,p) + \left<\tau \mathbf{u}_0\cdot\nabla\mathbf{w}_h ,\nabla p\right>_{\tilde{\Omega}}, ~\forall\mathbf{w}_h \in \mathbf{V}_h^0.
	\end{align}
	
	Recall $\mathbf{V}_h^0$ is the discrete divergence-free subspace.	
	Let $\mathbf{u}^I \in \mathbf{V}_h^0$, be any interpolation of $\mathbf{u}$, and $p^I \in Q_h$ be any interpolation of $p$. Note $b(\mathbf{w}_h,p^I)=0$. We have,
	\begin{align}
		a_{supg}(\mathbf{u}_h-\mathbf{u}^I,\mathbf{w}_h) &+ c_{supg}(\mathbf{u}_h-\mathbf{u}^I,\mathbf{w}_h) = 
		a_{supg}(\mathbf{u}-\mathbf{u}^I,\mathbf{w}_h) + c_{supg}(\mathbf{u}-\mathbf{u}^I,\mathbf{w}_h) \\ \nonumber &+ b(\mathbf{w}_h,p-p^I) + \left<\tau \mathbf{u}_0\cdot\nabla\mathbf{w}_h ,\nabla p\right>_{\tilde{\Omega}}, ~\forall\mathbf{w}_h \in \mathbf{V}_h^0.
	\end{align}
	
	Let $\mathbf{w}_h=\mathbf{u}_h-\mathbf{u}^I \in \mathbf{V}_h^0$, from lemma~\ref{L:coercivitySUPG},
	\begin{align}
		& \alpha_0||\mathbf{u}_h-\mathbf{u}^I||_1^2  + \frac{1}{6}||\tau^{1/2} \mathbf{u}_0\cdot\nabla(\mathbf{u}_h-\mathbf{u}^I)||_{\tilde{\Omega}}^2  \\
		\nonumber
		\leq &\left<\nu \nabla(\mathbf{u}-\mathbf{u}^I),\nabla(\mathbf{u}_h-\mathbf{u}^I)\right> +  \left< \mathbf{u}_0\cdot\nabla(\mathbf{u}-\mathbf{u}^I),\mathbf{u}_h-\mathbf{u}^I\right> \\ 
		\nonumber &+\left< (\mathbf{u}-\mathbf{u}^I)\cdot\nabla\mathbf{u}_{0},\mathbf{u}_h-\mathbf{u}^I\right>+\left< \gamma(\mathbf{u}-\mathbf{u}^I),\mathbf{u}_h-\mathbf{u}^I\right>\\
		\nonumber	&-\left< \nu\nabla^2(\mathbf{u}-\mathbf{u}^I),\tau \mathbf{u}_0\cdot\nabla(\mathbf{u}_h-\mathbf{u}^I)\right>_{\tilde{\Omega}} \\ \nonumber
		&+ \left< \mathbf{u}_0\cdot\nabla(\mathbf{u}-\mathbf{u}^I),\tau \mathbf{u}_0\cdot\nabla(\mathbf{u}_h-\mathbf{u}^I)\right>_{\tilde{\Omega}} \\
		\nonumber &+
		\left< (\mathbf{u}-\mathbf{u}^I)\cdot\nabla\mathbf{u}_{0},\tau \mathbf{u}_0\cdot\nabla(\mathbf{u}_h-\mathbf{u}^I)\right>_{\tilde{\Omega}} \\ \nonumber
		& +
		\left< \gamma(\mathbf{u}-\mathbf{u}^I),\tau \mathbf{u}_0\cdot\nabla(\mathbf{u}_h-\mathbf{u}^I)\right>_{\tilde{\Omega}}. \\
		\nonumber
		& + b(\mathbf{u}_h-\mathbf{u}^I,p-p^I) + \left<\tau \mathbf{u}_0\cdot\nabla(\mathbf{u}_h-\mathbf{u}^I) ,\nabla p\right>_{\tilde{\Omega}} \\ \nonumber
		\leq & \frac{\alpha_0}{2}||\mathbf{u}_h-\mathbf{u}^I||_1^2 + C_1||\mathbf{u}-\mathbf{u}^I||_1^2+\frac{1}{12}||\tau^{1/2} \mathbf{u}_0\cdot\nabla(\mathbf{u}_h-\mathbf{u}^I)||_{\tilde{\Omega}}^2 \\\nonumber
		&+ C_2\nu^2||\tau^{1/2}\nabla^2(\mathbf{u}-\mathbf{u}^I)||_{\tilde{\Omega}}^2 +
		C_3||p-p^I||^2 + C_4||\tau^{1/2}\nabla p||^2\\ \nonumber \\ \nonumber
		\Rightarrow~~~~ 
		&\frac{\alpha_0}{2}||\mathbf{u}_h-\mathbf{u}^I||_1^2 \\ \nonumber
		\leq & C_1||\mathbf{u}-\mathbf{u}^I||_1^2 + C_2\nu^2||\tau^{1/2}\nabla^2(\mathbf{u}-\mathbf{u}^I)||_{\tilde{\Omega}}^2+ C_3||p-p^I||^2 
		\\\nonumber
		& 
		+ C_4||\tau^{1/2}\nabla p||^2
	\end{align}
	Therefore,
	\begin{align}
		&~\exists~C>0,\\\nonumber
		&||\mathbf{u}-\mathbf{u}^I||_1^2 
		\leq  C(||\mathbf{u}-\mathbf{u}^I||_1^2 + \nu^2||\tau^{1/2}\nabla^2(\mathbf{u}-\mathbf{u}^I)||_{\tilde{\Omega}}^2+ ||p-p^I||^2+||\tau^{1/2}\nabla p||^2).
	\end{align}
	\\
	Since $\mathbf{u}\in \mathbf{V}^0$ and $\mathbf{u}^I\in\mathbf{V}^0_h$, from Corollary~\ref{C:interpolation},
	\begin{align}
		\exists~\mathbf{u}^I\in\mathbf{V}^0_h,~||\mathbf{u}-\mathbf{u}^I||_1 \leq C\inf_{\tilde{\mathbf{u}}^I\in \mathbf{V}_h,p^I\in Q_h}(||\mathbf{u}-\tilde{\mathbf{u}}^I||_1+||p-p^I||)\sim O(h^k),
	\end{align}
	
	From the choice of $\tau=\frac{h}{2\mathbf{u}_0}min(\frac{\alpha}{3},1)$,
	\begin{align}
		\tau &= O(\frac{h}{|\mathbf{u}_0|}),~Peclet~number~\alpha~is~large,\\\nonumber
		\tau &= O(\frac{h^2}{\nu}),~Peclet~number~\alpha~is~small.
	\end{align}
	Remember $\alpha=\frac{|\mathbf{u}_0|h}{2\nu}$ in each element.
	
	We therefore have,
	\begin{align}
		&\begin{cases}
			\nu^2\tau=\frac{\nu^2h}{2|\mathbf{u}_0|}=\frac{h^3|\mathbf{u}_0|}{8\alpha^2}\leq \frac{1}{8}h^3|\mathbf{u}_0|\leq Ch^3, ~\alpha~large,~|\mathbf{u}_0|~bounded,  \\ \nonumber
			\nu^2\tau=\frac{1}{12}\nu h^2\leq Ch^2,~\alpha~small,
		\end{cases}\\
		&\nu^2||\tau^{1/2}\nabla^2(\mathbf{u}-\mathbf{u}^I)||_{\tilde{\Omega}}^2 \leq C_{u}h^{2l},\\\nonumber
		&2l=\begin{cases}
			2k+1,~\alpha~large,\\
			2k,~\alpha~small,
		\end{cases}
	\end{align}
	where $C$'s are constant numbers in each estimation and $C_u$ is a function of $\mathbf{u}$. 

We also have,
\begin{align}
	&\begin{cases}
		\tau=\frac{h}{2|\mathbf{u}_0|}=\frac{h^2}{4\nu\alpha}\leq \frac{Ch^2}{\nu},~\alpha~large,\\
		\tau=\frac{h^2}{12\nu}\leq \frac{Ch^2}{\nu},~\alpha~small,
	\end{cases}\\\nonumber
	&||\tau^{1/2}\nabla p||^2 \leq C_p h^{2}.
\end{align}
Combining all results above and notice the leading order error is $O(h^2)$ when $k\geq 1$,
\begin{align}
	||\mathbf{u}-\mathbf{u}^I||_1^2 \leq C_{u,p} h^{2}.
\end{align}
Note all the $C_u$, $C_p$  are functions of $\mathbf{u}$ and $p$. Since given $\mathbf{f}$, the pair $(\mathbf{u},p)$ is unique. We can also write that,
\begin{align}
	||\mathbf{u}-\mathbf{u}^I||_1^2 \leq C_{f} h^{2}.
\end{align}

The leading-order error comes from $\left<\tau \mathbf{u}_0\cdot\nabla(\mathbf{u}_h-\mathbf{u}^I) ,\nabla p\right>_{\tilde{\Omega}}$. However, this scheme does not degrade the accuracy of $Q_1-P_0$ element.  It will be interesting to find a way to increase the order of convergence in the future.
\end{proof}

\begin{cor}
	Given the linear systems,
	\begin{align}
		a(T\mathbf{f},\mathbf{w}) + c(T\mathbf{f},\mathbf{w}) &= \left<\mathbf{f},\mathbf{w} \right>,
		~\forall\mathbf{w} \in \mathbf{V}^0, \label{E:LBB4} 
	\end{align}
	and,
	\begin{align}
		a_{supg}(T_h\mathbf{f},\mathbf{w}_h) + c_{supg}(T_h\mathbf{f},\mathbf{w}_h) &= \left<\mathbf{f},\mathbf{w}_h \right>+\left<\mathbf{f},\tau \mathbf{u}_0\cdot\nabla\mathbf{w}_h \right>_{\tilde{\Omega}},
		~\forall\mathbf{w}_h \in \mathbf{V}^0_h, \label{E:LBB5}
	\end{align}
	The operator $\mathbf{T}: \mathbf{V}\rightarrow\mathbf{V}$, $\mathbf{T}_h: \mathbf{V}\rightarrow\mathbf{V}_h$ is well-defined, with the norm bound,
	\begin{align}
		||(T-T_h)|_{\mathbf{E}(\mu)}||_{L(\mathbf{V})} \leq C h.
	\end{align}
	Here $T|_{\mathbf{E}(\mu)}$ means the restriction to the generalized eigenspace corresponding to eigenvalue $\mu$ of operator $T$. Here $\mu=\tilde\lambda^{-1}$, where $\tilde{\lambda}$ is defined in equation~(\ref{E:lambdaproblem}), and,
	\begin{align}
		\tilde{\lambda}T\mathbf{u} = \mathbf{u}.
	\end{align}
\end{cor}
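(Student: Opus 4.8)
The plan is to derive the uniform operator bound from the single-datum estimate established in the preceding Theorem, exploiting the fact that the generalized eigenspace $\mathbf{E}(\mu)$ is finite-dimensional. First I would verify that $T$ is a compact operator on $\mathbf{V}$. The right-hand side of~(\ref{E:LBB4}) sees the datum only through the $L^2$ pairing $\left<\mathbf{f},\mathbf{w}\right>$, so $T$ factors as $T=\tilde{T}\circ\iota$, where $\iota:\mathbf{V}\hookrightarrow[L^2(\Omega)]^3$ is the Sobolev embedding, compact by Rellich--Kondrachov, and $\tilde{T}:[L^2(\Omega)]^3\to\mathbf{V}^0$ is the bounded solution map furnished by Lax--Milgram using the coercivity of $a+c$ from Lemma~\ref{L:coerce1}. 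A bounded operator composed with a compact one is compact, so $T$ is compact. Consequently its nonzero spectrum is discrete and each generalized eigenspace $\mathbf{E}(\mu)=Ker\,(T-\mu I)^{k}$, with $k$ the finite ascent of $\mu$, is finite-dimensional.

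Next I would apply the Theorem to each datum $\mathbf{f}\in\mathbf{E}(\mu)$. For such $\mathbf{f}$ the associated pair $(\mathbf{u},p)=(T\mathbf{f},p)$ consists of a (generalized) eigenfunction and its pressure, which are smooth under the regularity hypotheses stated at the head of the Appendix. The Theorem then delivers $||(T-T_h)\mathbf{f}||_1=||\mathbf{u}-\mathbf{u}_h||_1\leq C_{\mathbf{f}}\,h$, where $C_{\mathbf{f}}$ is assembled from the higher Sobolev norms of $\mathbf{u}$ and $p$ entering the interpolation and consistency terms. The whole point of the corollary is to make this per-datum constant uniform over $\mathbf{E}(\mu)$.

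Uniformity follows from finite-dimensionality. By linearity of $T$ and $T_h$ the assignment $\mathbf{f}\mapsto(\mathbf{u},p)$ is linear, and its restriction to the finite-dimensional space $\mathbf{E}(\mu)$ sends $\mathbf{f}$ to functions whose higher-order Sobolev norms are continuous functionals of $\mathbf{f}$, by equivalence of norms in finite dimensions. Hence $C:=\sup\{C_{\mathbf{f}}:\mathbf{f}\in\mathbf{E}(\mu),\,||\mathbf{f}||_{\mathbf{V}}=1\}$ is finite, being the supremum of a continuous function over the (compact) unit sphere of $\mathbf{E}(\mu)$. By homogeneity of $T-T_h$,
\begin{align}
\nonumber ||(T-T_h)|_{\mathbf{E}(\mu)}||_{L(\mathbf{V})}
=\sup_{\substack{\mathbf{f}\in\mathbf{E}(\mu)\\ ||\mathbf{f}||_{\mathbf{V}}=1}}||(T-T_h)\mathbf{f}||_1
\leq C\,h,
\end{align}
which is exactly the asserted estimate.

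The main obstacle is the dependence of $C_{\mathbf{f}}$ on the datum: the Theorem only bounds one solution at a time, with a constant hiding unquantified regularity norms, so no uniform bound is available on all of $\mathbf{V}$. The decisive observation---and the reason the statement restricts $T-T_h$ to $\mathbf{E}(\mu)$---is that compactness of $T$ forces $\mathbf{E}(\mu)$ to be finite-dimensional, where these data-dependent constants are automatically controlled. A secondary technical point is to confirm that generalized eigenvectors (for ascent $k>1$) share the smoothness of genuine eigenfunctions; this follows by induction along the Jordan chain using the elliptic regularity carried by $T$. This bound is precisely the quantity that drives the Babu\v{s}ka--Osborn spectral approximation estimates, yielding the $O(h)$ eigenmode and $O(h^2)$ eigenvalue rates quoted in the text.
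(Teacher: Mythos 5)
Your argument is correct and is essentially the one the paper relies on implicitly: the paper states this corollary without proof, treating it as an immediate repackaging of the preceding theorem's per-datum bound $||\mathbf{u}-\mathbf{u}_h||_1\leq C_{u,p}h$ into an operator norm on the finite-dimensional space $\mathbf{E}(\mu)$, and your compactness/finite-dimensionality/equivalence-of-norms chain supplies exactly the missing details. The only point worth adding is that well-posedness of the discrete problem defining $T_h$ rests on the coercivity of $a_{supg}+c_{supg}$ from Lemma~\ref{L:coercivitySUPG}, hence on the restriction $\tau<\delta$, rather than on Lax--Milgram for the continuous form alone.
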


The convergence of the flow eigenvalue problem is a direct result of Bab\v{u}ska and Osborn~\cite{babuskaosborn1991},
\begin{theorem}
	(Bab\v{u}ska and Osborn) The distance of eigenspace $\mathbf{E}(\mu)$ and the discretized eigenspace $\mathbf{E}_h(\mu)$ satisfies the following bound,
	\begin{align}
		\hat{\delta}(\mathbf{E(\mu)},\mathbf{E_h(\mu)}) \leq C||(T-T_h)|_{\mathbf{E}(\mu)}||_{L(\mathbf{V})}.
	\end{align}
	Here
	\begin{align}
		\hat{\delta}(\mathbf{E(\mu)},\mathbf{E_h(\mu)})&=max(\delta(\mathbf{E(\mu)},\mathbf{E_h(\mu)}),\delta(\mathbf{E_h(\mu)},\mathbf{E(\mu)})),\\ \nonumber
		\delta(\mathbf E,\mathbf{F}) &= \sup_{\mathbf{u}\in E, ||\mathbf{u}||_1=1} \inf_{\mathbf{v}\in F}||\mathbf{u}-\mathbf{v}||_1
	\end{align}
\end{theorem}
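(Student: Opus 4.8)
The statement is the abstract spectral-approximation theorem of Bab\v{u}ska and Osborn, and the plan is to prove it within the standard Riesz-projection framework for compact operators. By elliptic regularity the solution operator $T$ maps $\mathbf{V}$ into a subspace that embeds compactly into $\mathbf{V}$, so $T$ is compact and its nonzero spectrum consists of isolated eigenvalues of finite algebraic multiplicity. Fix the eigenvalue $\mu=\tilde\lambda^{-1}$ and choose a closed rectifiable contour $\Gamma$ in $\mathbb{C}$ that encircles $\mu$ once and separates it from the remainder of the spectrum $\sigma(T)$.

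First I would introduce the Riesz spectral projections
\begin{align*}
E = \frac{1}{2\pi i}\oint_\Gamma (z-T)^{-1}\,dz, \qquad E_h = \frac{1}{2\pi i}\oint_\Gamma (z-T_h)^{-1}\,dz,
\end{align*}
whose ranges are precisely $\mathbf{E}(\mu)$ and $\mathbf{E}_h(\mu)$. The preceding corollary supplies $\|(T-T_h)|_{\mathbf{E}(\mu)}\|_{L(\mathbf{V})}\le Ch$, and more generally $\|T-T_h\|\to 0$. Using the resolvent identity
\begin{align*}
(z-T_h)^{-1}-(z-T)^{-1} = (z-T_h)^{-1}(T-T_h)(z-T)^{-1},
\end{align*}
together with the fact that $\Gamma$ avoids $\sigma(T)$, I would show that for $h$ small enough $\Gamma$ also avoids $\sigma(T_h)$, that $E_h$ is well defined, and that $\dim\mathbf{E}_h(\mu)=\dim\mathbf{E}(\mu)$.

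Second, I would reduce the gap to the projection difference through the standard estimate $\hat\delta(\mathbf{E}(\mu),\mathbf{E}_h(\mu))\le C\|(E-E_h)|_{\mathbf{E}(\mu)}\|$. Integrating the resolvent identity over $\Gamma$ expresses $E-E_h$ in terms of $T-T_h$; the essential refinement --- the heart of the sharp Bab\v{u}ska--Osborn bound --- is that the right-most factor $(z-T)^{-1}$ acts first on $\mathbf{E}(\mu)$, which is invariant under $T$ and hence under its resolvent, so that only the restriction $(T-T_h)|_{\mathbf{E}(\mu)}$ survives in the estimate. Combining this with the uniform resolvent bound over $\Gamma$ then yields $\hat\delta(\mathbf{E}(\mu),\mathbf{E}_h(\mu))\le C\|(T-T_h)|_{\mathbf{E}(\mu)}\|$.

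The main obstacle is the uniform resolvent bound $\sup_{z\in\Gamma}\|(z-T_h)^{-1}\|_{L(\mathbf{V})}\le C$ with $C$ independent of $h$. Pointwise-in-$z$ convergence of the resolvents follows immediately from the resolvent identity and $\|T-T_h\|\to 0$, but upgrading it to a bound uniform over the compact contour and uniform in $h$ is the delicate point; given the operator-norm estimate already established in the corollary, I would argue by contradiction, extracting a convergent subsequence on the compact set $\Gamma$ to reach the uniform bound. Once this estimate is secured, the Riesz-projection machinery and the reduction from the gap to the projection difference are routine.
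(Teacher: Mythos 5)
The paper does not prove this statement at all: it is quoted verbatim as a known theorem of Bab\v{u}ska and Osborn~\cite{babuskaosborn1991}, and the paper's only job is to verify its hypotheses through the preceding corollary. So the comparison is really between your sketch and the classical proof in the cited literature --- and your Riesz-projection argument is exactly that classical proof. The central insight you identify, namely that $\mathbf{E}(\mu)$ is invariant under $(z-T)^{-1}$ so that after integrating the resolvent identity over $\Gamma$ only the restriction $(T-T_h)|_{\mathbf{E}(\mu)}$ survives in the bound, is precisely the mechanism behind the sharp form of the estimate.

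Two points in your sketch are softer than they should be. First, the whole contour machinery (that $\Gamma$ eventually avoids $\sigma(T_h)$, that $E_h$ is well defined, and that $\dim\mathbf{E}_h(\mu)=\dim\mathbf{E}(\mu)$, which you also need for the reverse gap $\delta(\mathbf{E}_h(\mu),\mathbf{E}(\mu))$) requires convergence $\|T-T_h\|_{L(\mathbf{V})}\to 0$ in the \emph{full} operator norm, whereas the corollary you lean on only bounds the restriction to $\mathbf{E}(\mu)$. The unrestricted convergence is true but needs the separate, regularity-based observation that the constant $C_{u,p}$ in the source-problem estimate is controlled uniformly by $\|\mathbf{f}\|$; you assert it without deriving it. Second, the uniform resolvent bound you flag as the delicate step is in fact immediate: writing $z-T_h=(z-T)\bigl[I+(z-T)^{-1}(T-T_h)\bigr]$, using that $\sup_{z\in\Gamma}\|(z-T)^{-1}\|<\infty$ by continuity of the resolvent on the compact set $\Gamma\subset\rho(T)$, and invoking $\|T-T_h\|\to 0$, a Neumann series gives $\|(z-T_h)^{-1}\|\le 2\sup_{z\in\Gamma}\|(z-T)^{-1}\|$ for all sufficiently small $h$; no compactness or contradiction argument is required. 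With those two repairs your sketch is a complete and correct proof of the quoted theorem.
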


\begin{theorem}
	(Bab\v{u}ska and Osborn) Let $\mu$ be a non-zero eigenvalue of $T$ with algebraic multiplicity
	equal to $m$ and let $\hat \mu_h$ denote the arithmetic mean of the $m$ discrete eigenvalues of $T_h$ converging towards $\mu$. Let $\phi_1,\cdots, \phi_m$ be a basis of generalized
	eigenvectors in $E(\mu)$ and let $\phi^*_1,\cdots, \phi^*_m$ be a dual basis of generalized eigenvectors in $E^*(\mu)$. Then,
	\begin{align}
		|\mu-\hat \mu_h| \leq \frac{1}{m}\sum_{i=1}^{m}|\left<(T-T_h)\phi_i,\phi^*_i\right>|+C||T-T_h||_{L(\mathbf{V})}||T^*-T^*_h||_{L(\mathbf{V^*})}.
	\end{align}
\end{theorem}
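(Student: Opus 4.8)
The plan is to follow the classical spectral-approximation argument of Babu\v{s}ka and Osborn, viewing $T$ as a compact solution operator on $\mathbf{V}$ whose nonzero eigenvalue $\mu=\tilde\lambda^{-1}$ inherits the algebraic multiplicity $m$ of $\tilde\lambda$. First I would isolate $\mu$ by a closed contour $\Gamma\subset\mathbb{C}$ encircling $\mu$ and no other point of the spectrum of $T$, and introduce the Riesz spectral projections
\begin{align}
	\nonumber E=\frac{1}{2\pi i}\oint_{\Gamma}(z-T)^{-1}\,dz,\qquad E_h=\frac{1}{2\pi i}\oint_{\Gamma}(z-T_h)^{-1}\,dz.
\end{align}
The preceding corollary gives $\|(T-T_h)|_{\mathbf{E}(\mu)}\|_{L(\mathbf{V})}\le Ch$, and the approximation theorem gives $T_h\to T$ strongly; since $T$ is compact this upgrades to uniform resolvent convergence on the compact curve $\Gamma$. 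Hence for $h$ small $E_h$ has rank $m$, exactly the $m$ discrete eigenvalues $\mu_{1,h},\dots,\mu_{m,h}$ lie inside $\Gamma$, and the resolvent identity delivers $\|E-E_h\|\le C\|(T-T_h)|_{\mathbf{E}(\mu)}\|$, with the analogous estimate for the adjoint projections $E^*,E_h^*$ on $\mathbf{V}^*$.

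The second step is to convert both means into traces. Since $\hat\mu_h$ is the arithmetic mean of the eigenvalues of $T_h$ restricted to the range of $E_h$, one has $m\hat\mu_h=\mathrm{tr}(T_hE_h)$ and likewise $m\mu=\mathrm{tr}(TE)$. Using the biorthonormality $\langle\phi_i,\phi_j^*\rangle=\delta_{ij}$, the projection admits the representation $E=\sum_i\langle\,\cdot\,,\phi_i^*\rangle\phi_i$, so cyclicity of the trace together with $E^2=E$ gives $\mathrm{tr}(T_hE)=\mathrm{tr}(ET_hE)=\sum_i\langle T_h\phi_i,\phi_i^*\rangle$. Subtracting then yields the exact decomposition
\begin{align}
	\nonumber m(\mu-\hat\mu_h)=\mathrm{tr}\big[(T-T_h)E\big]+\mathrm{tr}\big[T_h(E-E_h)\big],
\end{align}
whose first term is precisely $\sum_i\langle(T-T_h)\phi_i,\phi_i^*\rangle$, the leading contribution in the claimed bound.

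It then remains to show the remainder $\tfrac1m\mathrm{tr}[T_h(E-E_h)]$ is genuinely quadratic. Here I would exploit the projector identity $E-E_h=E(I-E_h)-(I-E)E_h$ together with the commutation relations $TE=ET$ and $T_hE_h=E_hT_h$, so that after applying cyclicity the first-order part of the trace cancels and each surviving term carries one factor controlled by the primal consistency error $\|(T-T_h)|_{\mathbf{E}(\mu)}\|$ and one factor controlled---through $E^*-E_h^*$ acting on the dual generalized eigenvectors $\phi_i^*$---by the adjoint consistency error $\|(T^*-T_h^*)|_{\mathbf{E}^*(\mu)}\|$. Combined with the resolvent bounds on $\|E-E_h\|$ and $\|E^*-E_h^*\|$ from the first step, this produces
\begin{align}
	\nonumber \Big|\tfrac1m\mathrm{tr}[T_h(E-E_h)]\Big|\le C\,\|T-T_h\|_{L(\mathbf{V})}\,\|T^*-T_h^*\|_{L(\mathbf{V}^*)},
\end{align}
which assembles into the stated estimate.

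The main obstacle is exactly this last step: producing the algebraic cancellation that annihilates the first-order part of $\mathrm{tr}[T_h(E-E_h)]$ and rewriting what remains as a genuine product of a primal error and an adjoint error, rather than settling for the crude bound $\|E-E_h\|\sim\|T-T_h\|$ that would only give a $\|T-T_h\|^2$ remainder without exhibiting the adjoint structure. This is the whole reason the dual basis $\phi_i^*$ and the adjoint operators must enter the statement. Once the bound is in hand, the $O(h)$ operator estimate of the preceding corollary controls the second term, and a separate Galerkin-orthogonality estimate of the first-order term $\langle(T-T_h)\phi_i,\phi_i^*\rangle$ (pairing the primal error against the best approximation of $\phi_i^*$) upgrades the eigenvalue error to the superconvergent $O(h^2)$ rate quoted for the scheme.
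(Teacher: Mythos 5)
The paper does not prove this theorem at all: it is quoted, with attribution, from Babu\v{s}ka and Osborn~\cite{babuskaosborn1991}, and the only proof in the vicinity is the one-line remark that the final convergence theorem ``follows from Boffi Theorem $10.4$ and the above Bab\v{u}ska and Osborn theory.'' So there is no internal argument to compare yours against; what can be judged is whether your reconstruction of the classical proof holds up.

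In outline it does. The Riesz projections $E$, $E_h$, the trace identities $m\mu=\mathrm{tr}(TE)$ and $m\hat\mu_h=\mathrm{tr}(T_hE_h)$, the biorthogonal representation $E=\sum_i\langle\cdot,\phi_i^*\rangle\phi_i$, and the decomposition $m(\mu-\hat\mu_h)=\mathrm{tr}[(T-T_h)E]+\mathrm{tr}[T_h(E-E_h)]$ are exactly the skeleton of Osborn's argument, and you correctly identify the first term with $\sum_i\langle(T-T_h)\phi_i,\phi_i^*\rangle$. The genuine gap is the step you yourself flag as ``the main obstacle'': since $E-E_h$ is already first order in $T-T_h$, the claim that $\mathrm{tr}[T_h(E-E_h)]$ is second order --- and, more delicately, that it factors as a primal consistency error on $E(\mu)$ times an adjoint consistency error on $E^*(\mu)$ rather than merely $\|T-T_h\|^2$ --- is asserted, not derived. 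Producing it requires the resolvent identity $(z-T)^{-1}-(z-T_h)^{-1}=(z-T)^{-1}(T-T_h)(z-T_h)^{-1}$ under the contour integral, cyclicity of the trace to move one factor of $T-T_h$ onto the dual eigenspace, and the auxiliary bounds $\|(E-E_h)|_{\mathbf{E}(\mu)}\|\le C\|(T-T_h)|_{\mathbf{E}(\mu)}\|$ together with the adjoint counterpart; gesturing at ``projector identities and commutation relations'' does not yet exhibit the cancellation, which is precisely where the dual basis earns its place in the statement. A smaller point: uniform convergence of the resolvents on $\Gamma$ (needed so that exactly $m$ discrete eigenvalues lie inside $\Gamma$) requires $\|T-T_h\|_{L(\mathbf{V})}\to0$, not just the restricted estimate of the preceding corollary; for compact $T$ and conforming approximations this is standard but must be invoked explicitly. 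If you intend to prove the theorem rather than cite it as the paper does, these are the two places where the work remains.
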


\begin{theorem}
	The eigenvalue problem has an error estimate for the $i$th eigenvalue,
	\begin{align}
		|\tilde{\lambda}_i - (\frac{1}{m}\sum_{i_k=1}^{m}\tilde{\lambda}_{i_kh}^{-1})^{-1} | \leq C_i h^{2},
	\end{align}
	where $m$ is the algebraic multiplicity of $\tilde{\lambda}_i$ and $\tilde{\lambda}_{i_kh}$ are the $m$ discrete eigenvalues converging towards $\tilde{\lambda}$.
	
	The estimate for the $i$th eigenmode is,
	\begin{align}
		||\mathbf{u}^i-\mathbf{u}^i_h||_1 \leq C_i h.
	\end{align}
\end{theorem}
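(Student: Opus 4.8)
The plan is to obtain both estimates as consequences of the two quoted Babu\v{s}ka--Osborn theorems, using as the single analytic input the operator bound $||(T-T_h)|_{\mathbf{E}(\mu)}||_{L(\mathbf{V})} \leq Ch$ furnished by the preceding corollary, and then translating statements about the eigenvalue $\mu=\tilde\lambda^{-1}$ of $T$ back into statements about $\tilde\lambda$. The eigenmode bound is essentially immediate: since $\mathbf{u}^i\in\mathbf{E}(\mu)$, the one-sided gap $\delta(\mathbf{E}(\mu),\mathbf{E}_h(\mu))$ already dominates $\inf_{\mathbf{v}\in\mathbf{E}_h(\mu)}||\mathbf{u}^i-\mathbf{v}||_1$, so choosing $\mathbf{u}^i_h$ as a best approximant in $\mathbf{E}_h(\mu)$ and applying the gap theorem gives $||\mathbf{u}^i-\mathbf{u}^i_h||_1 \leq \hat\delta(\mathbf{E}(\mu),\mathbf{E}_h(\mu)) \leq C||(T-T_h)|_{\mathbf{E}(\mu)}||_{L(\mathbf{V})} \leq C_i h$.

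For the eigenvalue I would apply the second Babu\v{s}ka--Osborn theorem, which bounds $|\mu-\hat\mu_h|$ by $\frac{1}{m}\sum_i|\langle(T-T_h)\phi_i,\phi^*_i\rangle|$ plus $C||T-T_h||_{L(\mathbf{V})}||T^*-T^*_h||_{L(\mathbf{V}^*)}$. The second summand is manifestly $O(h^2)$: the factor $||T-T_h||$ is $O(h)$ by the corollary, and $||T^*-T^*_h||$ is $O(h)$ by rerunning the source-problem analysis (the coercivity of Lemma~\ref{L:coercivitySUPG} together with the convergence theorem) on the transposed bilinear form, whose convection terms satisfy the same bounds used in Lemma~\ref{L:coerce1}. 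The genuine work is the first summand, which must be shown to be $O(h^2)$ rather than the $O(h)$ that a naive Cauchy--Schwarz yields; this \emph{superconvergence} is exactly what doubles the eigenvalue rate relative to the eigenmode rate.

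To extract it I would set $e_i=(T-T_h)\phi_i$, introduce the adjoint source solution $\chi=T^*\phi^*_i$ (the solution of the transposed problem with data $\phi^*_i$), and use the defining weak forms of $T$ and $T_h$ together with the weakly-divergence-free approximation of Corollary~\ref{C:interpolation} to rewrite the pairing as $a_{supg}(e_i,\chi-w_h)+c_{supg}(e_i,\chi-w_h)$ modulo the SUPG consistency residual $\langle\tau\mathbf{u}_0\cdot\nabla w_h,\nabla p\rangle_{\tilde{\Omega}}$ that already appeared in the preceding convergence theorem, valid for any interpolant $w_h\in\mathbf{V}_h^0$ of $\chi$. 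The first block is a product of two $O(h)$ factors, namely $||e_i||_1$ and $||\chi-w_h||_1$, hence $O(h^2)$; the consistency residual must be checked separately, and it too is $O(h^2)$ because $\tau=O(h/|\mathbf{u}_0|)$ for large Peclet number and $\tau=O(h^2/\nu)$ for small Peclet number supplies the extra power of $h$. This is the main obstacle: the SUPG stabilization destroys exact Galerkin orthogonality, so the residual terms genuinely have to be tracked and shown not to spoil the doubled rate, rather than simply invoking a standard conforming-Galerkin duality argument.

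Finally I would pass from $\mu$ to $\tilde\lambda=\mu^{-1}$. Writing $\hat\mu_h=\frac{1}{m}\sum_{i_k}\tilde\lambda_{i_kh}^{-1}$, which converges to $\mu\neq0$, I would use $|\tilde\lambda_i-\hat\mu_h^{-1}| = |\mu-\hat\mu_h|/(|\mu|\,|\hat\mu_h|) \leq C|\mu-\hat\mu_h|$ once $h$ is small enough that $|\hat\mu_h|\geq|\mu|/2$; since $\hat\mu_h^{-1}=(\frac{1}{m}\sum_{i_k}\tilde\lambda_{i_kh}^{-1})^{-1}$, the $O(h^2)$ bound on $|\mu-\hat\mu_h|$ transfers verbatim to the claimed estimate $|\tilde\lambda_i-(\frac{1}{m}\sum_{i_k}\tilde\lambda_{i_kh}^{-1})^{-1}|\leq C_i h^2$.
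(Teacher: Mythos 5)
Your argument follows the same route as the paper: the paper's entire proof of this theorem is the citation of Boffi's Theorem 10.4 together with the two Babu\v{s}ka--Osborn theorems stated just above it and the $O(h)$ bound on $\|(T-T_h)|_{\mathbf{E}(\mu)}\|_{L(\mathbf{V})}$ from the preceding corollary, and your sketch is a faithful expansion of exactly that machinery (gap theorem for the eigenmode, the two-term Babu\v{s}ka--Osborn eigenvalue bound plus a duality/superconvergence estimate for $\langle(T-T_h)\phi_i,\phi_i^*\rangle$, then inversion of $\mu=\tilde\lambda^{-1}$). The one point to tighten is your justification that the SUPG consistency residual is $O(h^2)$: for large element Peclet number $\tau$ is only $O(h/|\mathbf{u}_0|)=O(h)$, so by itself it supplies only one power of $h$ against the $O(1)$ interpolant of $\chi$, and the doubled rate must instead be drawn from the uniform bound $\tau\le Ch^2/\nu$ (valid in both Peclet regimes) that the paper establishes inside the preceding convergence theorem, or from splitting $w_h=\chi+(w_h-\chi)$ and treating the $O(1)$ part separately.
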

\begin{proof}
	The proof follows from Boffi~\cite{boffieig} Theorem $10.4$ and the above Bab\v{u}ska and Osborn theory.
\end{proof}

\end{document}